\documentclass[a4paper, 11pt]{article}
\usepackage[T1]{fontenc}
\usepackage[utf8]{inputenc}
\usepackage{xspace}
\usepackage{graphicx}
\usepackage{amsmath}
\usepackage{amsthm}
\usepackage{amssymb,mathscinet}
\usepackage{authblk}
\usepackage{enumitem}
\usepackage{thmtools} 
\usepackage{thm-restate}
\usepackage{xcolor}
\usepackage{url}
\usepackage{fullpage}
\usepackage[hidelinks]{hyperref}
\usepackage{stmaryrd}

\newcommand{\eps}{\ensuremath{\varepsilon}\xspace}

\newcommand{\etr}{\ensuremath{\textrm{ETR}}\xspace}
\newcommand{\etrconjunction}{\ensuremath{\textrm{ETR-CONJ}}\xspace}
\newcommand{\etrsmall}{\ensuremath{\textrm{ETR-SMALL}}\xspace}
\newcommand{\etrshift}{\ensuremath{\textrm{ETR-SHIFT}}\xspace}
\newcommand{\etrsquare}{\ensuremath{\textrm{ETR-SQUARE}}\xspace}
\newcommand{\etrinv}{\ensuremath{\textrm{ETR-INV}}\xspace}

\newcommand{\etrami}{\ensuremath{\textrm{ETR-AMI}}\xspace}
\newcommand{\etrcompact}{\ensuremath{\textrm{ETR-COMPACT}}\xspace}

\newcommand{\linearExtension}{linear extension\xspace}
\newcommand{\LLinearExtension}{Linear extension\xspace}
\newcommand{\R}{\ensuremath{\mathbb R}\xspace}
\newcommand{\Q}{\ensuremath{\mathbb Q}\xspace}

\newcommand{\Z}{\ensuremath{\mathbb Z}\xspace}
\newcommand{\N}{\ensuremath{\mathbb N}\xspace}
\newcommand{\ER}{\ensuremath{\exists \mathbb{R}}\xspace}
\newcommand{\ETR}{\ensuremath{\textrm{ETR}}\xspace}
\newcommand{\NP}{\ensuremath{\textrm{NP}}\xspace}
\newcommand{\PSPACE}{\ensuremath{\textrm{PSPACE}}\xspace}
\newcommand{\leqlin}{\leq_{\text{lin}}}
\newcommand{\const}[1]{\ensuremath{\left\llbracket \, #1 \,\right\rrbracket}\xspace}

\newcommand{\mydef}{:=}

\newcommand{\etal}{et al.}

\newtheorem{theorem}{Theorem}
\newtheorem*{theorem*}{Theorem}
\newtheorem{corollary}[theorem]{Corollary}
\newtheorem{example}[theorem]{Example}

\newtheorem{lemmaAlph}{Lemma}

\theoremstyle{definition}

\newtheorem{remark}[theorem]{Remark}
\newtheorem{definition}[theorem]{Definition}

\begin{document}

\title{Dynamic Toolbox for ETRINV}
\author[1]{Mikkel Abrahamsen\footnote{The first author is part of Basic Algorithms Research Copenhagen (BARC), supported by the VILLUM Foundation grant 16582.}}
\author[2]{Tillmann Miltzow\footnote{The second author acknowledges generous support by the NWO Veni grant EAGER.}}
\affil[1]{University of Copenhagen, Denmark.
\texttt{miab@di.ku.dk}}
\affil[2]{ Utrecht University, Netherlands. \texttt{t.miltzow@gmail.com}}

\maketitle

\begin{abstract}
 Recently, various natural algorithmic problems have been shown
 to be \ER-complete. 
The reduction relied in many 
cases on the 
 \ER-completeness of the problem
 \etrinv, which served as a useful 
 intermediate problem.
 Often some strengthening and modification of \etrinv was required.
 This lead to a cluttered situation where no paper included all the previous details. 
 Here, we give a streamlined exposition 
 in a self-contained manner. We also explain and prove various
 universality results regarding \etrinv. 
 
 These notes should not be seen as a research paper with new results.
 However, we do describe some refinements of earlier 
 results which might be useful for future research.
 We plan to extend and update this exposition as seems fit.
\end{abstract}

\section{Introduction}

\subsection{The complexity class \ER}
The \emph{first order theory of the reals} is a set of all true
sentences involving real variables, universal and existential quantifiers, boolean and arithmetic operators, constants $0$ and $1$, parenthesis, equalities and inequalities, i.e., the alphabet is the set
$$\left\{x_1,x_2,\ldots, \forall, \exists, \land,\lor,\lnot, 0 ,1 ,+ ,- ,\cdot,\allowbreak\ (\ ,\ )\ ,=,<,\leq\right\}.$$
A formula is called a \emph{sentence} if it has no free variables, i.e., each variable present in the formula is bound by a quantifier.
Note that using such formulas, we can easily express integer constants (using binary expansion) and powers.
Each formula can be converted to a \emph{prenex form}, which means that it starts with all the quantifiers and is followed by a quantifier-free formula.
Such a transformation changes the length of the formula by at most a constant factor.

The \emph{existential theory of the reals} is the set of all true sentences of the first-order theory of the reals in prenex form with existential quantifiers only, i.e., the sentences are of the form 
$$(\exists x_1 \exists x_2 \ldots \exists x_n)\enspace \Phi(x_1,x_2,\ldots,x_n),$$ 
where $\Phi\mydef \Phi(x_1,x_2,\ldots,x_n)$ is a quantifier-free formula of the first-order theory of the reals with variables $x_1,\ldots,x_n$.
The problem \ETR is the problem of deciding whether a given sentence of the above form is true, and we say that $\Phi$ is an \emph{\ETR formula}.
We define \[V(\Phi) \mydef \{ \mathbf x\in \R^n : \Phi(\mathbf x)  \}.\]
Thus \ETR is the problem of deciding if $V(\Phi)$ is non-empty.
The complexity class \ER consists of  all problems that are reducible to \ETR in polynomial time. It is currently known that \[ \NP \subseteq \ER \subseteq \PSPACE.\]

It is not hard see that the problem \ETR is \NP-hard, 
yielding the first inclusion.
The containment \ER $\subseteq$ \PSPACE is highly non-trivial, 
and it has first been established by Canny~\cite{canny1988some}.
In order to compare the complexity classes $\NP$ and $\ER$, 
we suggest the reader to consider the following two problems.
The problem of deciding whether a given polynomial equation
$Q(x_1,\ldots,x_n)=0$ with integer coefficients has a 
solution with all variables restricted to $\{0,1\}$ is 
easily seen to be \NP-complete.
On the other hand, if the variables are merely restricted to $\R$, 
the problem is \ER-complete~\cite[Proposition 3.2]{matousek2014intersection}.

The \emph{description complexity} or simply \emph{complexity} of an \ETR formula~$\Phi$ is the number of symbols in $\Phi$, and is also denoted $|\Phi|$.
The \emph{complexity} of a semi-algebraic set $S$ is the minimum complexity of a formula $\Phi$ such that $V(\Phi)=S$.

\subsection{Contribution}
 Abrahamsen,  Adamaszek and Miltzow~\cite{ARTETR} introduced the 
 algorithmic problem \etrinv and showed that it is 
 \ER-complete. This was one of the important
 conceptual steps to show that the Art Gallery Problem
 is \ER-complete, for two reasons.
 First, all variables are conveniently bounded
 to the interval $[1/2,2]$. Second,
 it is only necessary to encode inversion constraints
 ($x\cdot y = 1$)
 instead of the more general multiplication constraints
 ($x\cdot y = z$).
 See Section~\ref{sec:INV} for a precise definition.
 
 \begin{figure}[btp]
 \centering
 \includegraphics{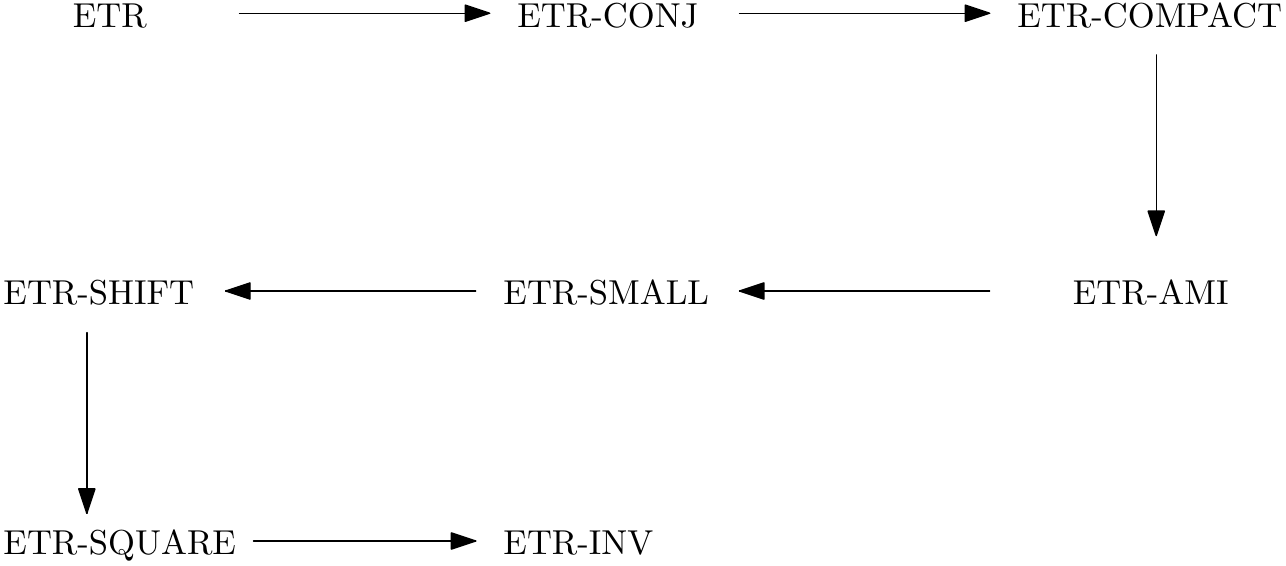}
 \caption{Overview of the problems described in these notes and how they are reduced to each other.
 Almost all reductions preserve rational equivalence and are linear.}
 \label{fig:ReducitonPath}
\end{figure}
 
 In this exposition, we repeat the reduction
 from~\cite{ARTETR}.
 Some refinements are added that 
 might turn out useful in future 
 research~\cite{AreasKleist, NestedPolytopesER, AnnaPreparation}..
 Let us point out the advantages of this exposition.
 
 \paragraph{Self-Containment and exhaustiveness.}
 We explain all details in one coherent exposition. In particular those
 that are scattered around and pointed out in various papers.
 In particular, we also explain some parts that
 were folklore, and only cited in~\cite{ARTETR}.
 (However, we still make use of some facts from real algebraic geometry where we refer to other sources for a proof.)
 
 \paragraph{Universality-type statements.}
 Rather than just \ER-hardness, we 
 also want to make so-called \emph{universality}-type statements,
 which we will explain in Section~\ref{sec:Universality}.
 For that, we have to point out some details
 that were not given in~\cite{ARTETR}.
  
 \paragraph{Running time analysis.}
 We point out the running time of the reductions
 precisely. This can be helpful for so called 
 fine-grained complexity and lower bounds based
 on the Exponential Time Hypothesis.

\paragraph{\LLinearExtension.}
In our reductions, we transform one formula $\Phi$ to another $\Psi$.
In generating $\Psi$, we replace old
variables by new \emph{replacement} variables and 
we add new \emph{auxiliary} variables.
Here, all the replacement variables are
just shifted or scaled versions of the old variables,
and the auxiliary variables are likewise determined by the old
variables.
We capture this by defining the notion of
\linearExtension{}s in Section~\ref{sec:Universality}.
This notion is useful for the so-called 
Picasso-Theorem~\cite{ARTETR} and Kempe's universality theorem.

 \paragraph{Dynamics.}
 It has happened repeatedly that we needed some slightly stronger statement than was provided in the previous paper~\cite{ARTETR}.
 We expect this to happen in the future as well.
 This exposition is supposed to be dynamic and updated, so as to give a complete overview of the known techniques for making \ER-hardness proofs by reductions from \etrinv.

 \paragraph{Modularity.}
 Every reduction is stated as a separate lemma. 
 We hope that in this way it is more convenient to 
 reuse parts of the reductions in forthcoming papers.
 
  \paragraph{Tiny range promise.}
 Interestingly, all variables can 
 be restricted to an arbitrarily small
 range within $[1/2,2]$. This is an important observation
 for a forthcoming paper.

\paragraph{Reductions by diagrams.}
 There were a large number
 of new variables and constraints introduced in the previous reduction to \etrinv.
 It was straight-forward but tedious to
 check that those constraints work correctly and that each new variable was in the correct range.  
We introduce a new type of diagrams to express such reductions involving numerous constraints, which makes it much easier to read and check the reductions.
%
 
 \subsection{Main Result}
 To illustrate our findings, we point out one 
 important theorem. Similar theorems can be derived
 from the reductions presented in this exposition.
 See Section~\ref{sec:Universality} for the definition
 of rational equivalence and \linearExtension.
 \begin{theorem}
 \label{thm:main}
  \etrinv is \ER-complete.
  Furthermore, for every instance $\Phi$ of \etr where $V(\Phi)$ is compact, 
  there is an instance $\Psi$ of \etrinv such that
  $V(\Phi)$ and $V(\Psi)$ are rationally equivalent and $V(\Psi)$ is a \linearExtension of $V(\Phi)$.
 \end{theorem}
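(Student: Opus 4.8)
The plan is to treat the two assertions in turn. Membership of \etrinv in \ER is immediate: an instance of \etrinv is in particular an \etr instance, since each constraint $x+y=z$, $x\cdot y=1$, $x=1$, together with the range restrictions $1/2\le x\le 2$, is a quantifier-free formula over the reals, so the identity map is a polynomial-time reduction to \ETR. All the substance lies in the ``furthermore'' statement, which at the same time supplies \ER-hardness: I would exhibit a reduction from \etr to \etrinv that, applied to an instance $\Phi$ with $V(\Phi)$ compact, outputs $\Psi$ with $V(\Psi)$ a \linearExtension of $V(\Phi)$ and rationally equivalent to it.

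I would assemble this reduction as a chain of elementary reductions --- each one a separate lemma in these notes, see Figure~\ref{fig:ReducitonPath} --- and use that being linear, being a rational equivalence, and being a \linearExtension are all preserved under composition (transitivity, Section~\ref{sec:Universality}), so it suffices to verify each link. Concretely: (i) using compactness, place $V(\Phi)$ inside a cube $[-N,N]^n$ and rescale $x_i\mapsto N z_i$, adjoining the explicit bounds $-1\le z_i\le 1$; this affine change of coordinates is a rational equivalence and a \linearExtension. (ii) Evaluate every polynomial occurring in $\Phi$ by a straight-line program, introducing one auxiliary variable per arithmetic gate, turning $\Phi$ into an equivalent conjunction of constraints of the forms $x+y=z$ and $x\cdot y=z$ (the Boolean structure being handled in the standard way); each auxiliary variable is a polynomial, hence a determined function, of the original ones, and a further rescaling keeps all variables in a fixed box. (iii) Translate and scale so that every variable lies in $[1/2,2]$ --- or, for the tiny-range refinement, in a prescribed arbitrarily small subinterval --- by replacing each variable $v$ with an affine image $\lambda v+\mu$ and rewriting the addition and multiplication constraints accordingly, at the cost of a bounded number of further affine auxiliary variables. (iv) Replace every multiplication constraint by inversion constraints: from $\frac{1}{x-1}-\frac{1}{x+1}=\frac{2}{x^2-1}$ one obtains $x^2$ using a bounded number of additions-with-constants and inversions, and then $x\cdot y$ from squarings via the polarization $4xy=(x+y)^2-(x-y)^2$; each gadget variable is a rational function of its inputs, hence of the original variables.

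The main obstacle is the range bookkeeping in steps (iii) and (iv). The identities realizing multiplication through inversion are transparent over all of $\R$, but the naive intermediate quantities --- already $x-1$, which sits near $0$ --- fall outside the admissible window $[1/2,2]$; every gadget variable must therefore be composed with its own affine reparametrization chosen so that it, too, lands in $[1/2,2]$ (or in the tiny interval), and one must then check that the reparametrized constraints are still expressible using only $+$, inversion and the constant $1$, that the induced correspondence between solution sets is a bijection given by rational maps with rational coefficients in both directions (so that it is a rational equivalence), and that the original variables survive, each distorted only by an individual affine map, among the coordinates of $V(\Psi)$ (so that it is a \linearExtension). Granting each link of the chain with these guarantees, composition proves the ``furthermore'' part of Theorem~\ref{thm:main}; tracking polynomial running time along the chain --- and, for the plain hardness claim, prefixing the standard reduction of arbitrary \etr to the variant \etrcompact with compact solution set, which is the one link that need not preserve the solution set --- yields \ER-completeness of \etrinv.
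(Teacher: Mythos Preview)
Your proposal is correct and follows essentially the same route as the paper: membership is trivial, and both hardness and the ``furthermore'' claim are obtained by composing the chain of lemmas in Figure~\ref{fig:ReducitonPath}, using transitivity of rational equivalence and of \linearExtension, with the passage through \etrcompact (Lemma~\ref{lem:Reduction-Conjunction-Compact}) being the one link used only for hardness since it does not preserve the solution set.

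One small point of divergence worth noting: you place the rescaling into a cube (your step~(i)) \emph{before} breaking the polynomials into atomic addition/multiplication constraints (your step~(ii)), whereas the paper does it the other way around --- first \etrami (Lemma~\ref{lem:Reduction-AMI-INV}), then \etrsmall (Lemma~\ref{lem:Reduction-AMI-SMALL}). Your order forces you to rescale a second time (``a further rescaling keeps all variables in a fixed box''), since the auxiliary variables introduced by the straight-line program are polynomials in the $z_i$ and need not stay in $[-1,1]$. The paper's order avoids this: it first produces the \etrami formula, whose solution set is still compact (being rationally equivalent to $V(\Phi)$), and then invokes the quantitative bound of Corollary~\ref{cor:BallContain} once to rescale \emph{all} variables, original and auxiliary, simultaneously. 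Either order works, but you should be explicit that an effective bound such as Corollary~\ref{cor:BallContain} is what makes the rescaling step computable in polynomial time; ``using compactness'' alone only gives existence of~$N$, which suffices for the ``furthermore'' statement but not for the hardness reduction.
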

 We give two corollaries here to the main theorem.
 \begin{corollary}[Algebraic consequence]
  Let $\alpha$ be an algebraic number. Then there exists
  an instance $\Psi$ of \etrinv, such that 
  $\Psi$ has a solution when the variables are restricted to $\Q[\alpha]$,
  but no solution when the variables are restricted to a field that does not
  contain $\alpha$.
 \end{corollary}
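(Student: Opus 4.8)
The plan is to feed Theorem~\ref{thm:main} an \etr instance that already isolates $\alpha$, and then read off the field-theoretic content on the other side. Assume $\alpha\in\R$ (for a non-real algebraic number there is no real solution set to speak of). Let $p\in\Z[x]$ be a nonzero polynomial with $p(\alpha)=0$, obtained from the minimal polynomial of $\alpha$ over $\Q$ by clearing denominators. Since $p$ has only finitely many real roots, choose rationals $a<b$ with $a<\alpha<b$ such that $\alpha$ is the \emph{only} root of $p$ in $(a,b)$, and set $\Phi(x)\mydef (p(x)=0)\wedge (a<x)\wedge (x<b)$. This is a legitimate \etr formula in one variable with $V(\Phi)=\{\alpha\}$, which is compact. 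The isolating interval is essential: if one used $p(x)=0$ alone, $V(\Phi)$ would also contain the real conjugates of $\alpha$, and the resulting \etrinv instance would acquire solutions living in the fields generated by those conjugates, which need not contain $\alpha$.

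Now apply Theorem~\ref{thm:main} to $\Phi$ to obtain an \etrinv instance $\Psi$ with $V(\Psi)$ rationally equivalent to $V(\Phi)=\{\alpha\}$ and a \linearExtension of it. Since rational equivalence entails in particular a bijection $V(\Psi)\leftrightarrow V(\Phi)$, the set $V(\Psi)$ is a single point $\mathbf y^\star\in\R^n$, so $V(\Psi)\neq\emptyset$. Because $V(\Psi)$ is a \linearExtension of $V(\Phi)$, each coordinate of $\mathbf y^\star$ is a fixed affine image (with rational coefficients) of $\alpha$, respectively a fixed auxiliary value determined likewise; hence $\mathbf y^\star\in\Q[\alpha]^n$, so $\Psi$ has a solution when the variables are restricted to $\Q[\alpha]$. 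For the converse inclusion, rational equivalence supplies a map $g$ with rational coefficients sending $V(\Psi)$ onto $V(\Phi)$, so $\alpha=g(\mathbf y^\star)\in\Q(\mathbf y^\star)$; combined with $\mathbf y^\star\in\Q[\alpha]^n$ this gives $\Q(\mathbf y^\star)=\Q(\alpha)$. Consequently, if $\mathbb F\subseteq\R$ is a field with $\alpha\notin\mathbb F$ and $\Psi$ had a solution with all variables in $\mathbb F$, that solution would be a real point of $V(\Psi)$, hence equal to $\mathbf y^\star$, forcing $\Q(\alpha)=\Q(\mathbf y^\star)\subseteq\mathbb F$ --- a contradiction. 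So $\Psi$ has no solution over $\mathbb F$.

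The combinatorial heavy lifting is entirely delegated to Theorem~\ref{thm:main}; what needs genuine care is (i) choosing $\Phi$ so that $V(\Phi)$ is exactly the singleton $\{\alpha\}$ and not $\{\alpha\}$ together with real conjugates, and (ii) tracking that the reduction neither adds nor removes field-theoretic information, i.e.\ that rational equivalence together with the \linearExtension property yields $\Q(\mathbf y^\star)=\Q(\alpha)$ exactly. A minor further subtlety: if ``field that does not contain $\alpha$'' is meant as an abstract ordered field $\mathbb F$ rather than a subfield of $\R$, one first passes to the real closure $\mathbb F^{\mathrm{rc}}$ and uses the Tarski--Seidenberg transfer principle to see that $\{\mathbf y^\star\}$ is still the only solution of $\Psi$ over $\mathbb F^{\mathrm{rc}}$; since $\alpha\in\mathbb F^{\mathrm{rc}}$, solvability over $\mathbb F$ then amounts to $\mathbf y^\star\in\mathbb F^n$, i.e.\ to $\alpha\in\mathbb F$.
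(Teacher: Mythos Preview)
Your approach is the same as the paper's---isolate $\alpha$ as the unique zero of an integer polynomial on a rational interval, note that $V(\Phi)=\{\alpha\}$ is compact, and invoke Theorem~\ref{thm:main}---and you actually carry out the field-theoretic verification that the paper leaves implicit. One imprecision: the step ``because $V(\Psi)$ is a \linearExtension of $V(\Phi)$, each coordinate of $\mathbf y^\star$ is a fixed affine image (with rational coefficients) of $\alpha$'' does not follow from the paper's formal definition of \linearExtension, which only guarantees that \emph{some} projected coordinates recover $\alpha$ affinely and says nothing about the remaining auxiliary coordinates. The conclusion $\mathbf y^\star\in\Q[\alpha]^n$ is nonetheless correct and comes directly from rational equivalence: the forward homeomorphism $f\colon V(\Phi)\to V(\Psi)$ has integer-coefficient rational components, so $\mathbf y^\star=f(\alpha)\in\Q(\alpha)^n=\Q[\alpha]^n$. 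With that adjustment your argument is complete (and more thorough than the paper's own proof, which stops after invoking Theorem~\ref{thm:main}).
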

 \begin{proof}
  Let $p \in\Z[x]$ be a univariate polynomial
  with $p(\alpha) = 0$, $p\neq 0$. Furthermore,
  let $\alpha\in [a,b]$ be an interval such that
  $\alpha$ is the only root of $p$ in $[a,b]$.
  Then $p(x) = 0$ and $a\leq x\leq b$, describe
  a compact semi-algebraic set $V$.
  By Theorem~\ref{thm:main} there is an \etrinv
  instance $\Psi$ such that $V$ and $V(\Psi)$ 
  are rationally equivalent.
 \end{proof}
 
 
 \begin{corollary}[Torus]
  There exists
  an instance $\Psi$ of \etrinv such that
  $V(\Psi)$ is homeomorphic to a torus.
 \end{corollary}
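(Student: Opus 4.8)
The natural plan is to deduce this from Theorem~\ref{thm:main} by exhibiting a single compact semi-algebraic set that is already homeomorphic to a torus and feeding it into the theorem. A convenient choice is the torus of revolution: take the quantifier-free \etr formula
\[
  \Phi(x,y,z) \mydef \Bigl( \bigl(x^2+y^2+z^2+3\bigr)^2 = 16\,(x^2+y^2) \Bigr),
\]
which is the polynomial form of $\bigl(\sqrt{x^2+y^2}-2\bigr)^2+z^2 = 1$ (squaring introduces no spurious points, since the left-hand side $x^2+y^2+z^2+3$ is always positive).

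First I would verify the hypotheses and the topology of $V(\Phi)$. Compactness is immediate: $V(\Phi)$ is closed as the zero set of a polynomial, and it is bounded because, writing $s \mydef x^2+y^2+z^2$, the equation yields $(s+3)^2 = 16(x^2+y^2) \le 16 s$ and hence $1 \le s \le 9$. That $V(\Phi)$ is homeomorphic to $S^1\times S^1$ is the classical description of the surface of revolution; one can cite the explicit angular chart $(\theta,\varphi)\mapsto\bigl((2+\cos\varphi)\cos\theta,\,(2+\cos\varphi)\sin\theta,\,\sin\varphi\bigr)$, which is a continuous bijection from the compact space $S^1\times S^1$ onto $V(\Phi)$ and therefore a homeomorphism.

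Then I would invoke Theorem~\ref{thm:main}: since $V(\Phi)$ is compact, it provides an \etrinv instance $\Psi$ with $V(\Psi)$ rationally equivalent to $V(\Phi)$ and a \linearExtension of it. The last step is to note that this forces $V(\Psi)$ to be homeomorphic to $V(\Phi)$: being a \linearExtension, $V(\Psi)$ is the graph of a tuple of functions over an affine rescaling of $V(\Phi)$, so the coordinate projection restricts to a homeomorphism $V(\Psi)\to V(\Phi)$; equivalently, a rational equivalence between two bounded semi-algebraic sets is in particular a homeomorphism, the rational maps being continuous wherever they are applied. Hence $V(\Psi)$ is homeomorphic to the torus.

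I expect the only real subtlety — rather than a genuine obstacle — to be in that final topological bookkeeping: one must make sure the \linearExtension/rational-equivalence notion of Section~\ref{sec:Universality} delivers a bicontinuous map and not merely a semialgebraic bijection or a birational map with base points. On the compact sets at hand this is automatic, so the argument goes through; the remaining care is simply to confirm that the main theorem's compactness hypothesis is met, which the bound $1\le x^2+y^2+z^2\le 9$ above guarantees.
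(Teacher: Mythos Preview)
Your proof is correct and follows essentially the same approach as the paper: exhibit the torus of revolution as a compact semi-algebraic set via the standard quartic equation and then apply Theorem~\ref{thm:main}. The paper uses radii $R=5$, $r=1$ rather than your $R=2$, $r=1$, and is considerably terser (it simply asserts compactness and that the equation describes a torus), while you spell out the bound $1\le x^2+y^2+z^2\le 9$ and the angular parametrisation; note also that your closing concern is moot, since in this paper rational equivalence is \emph{defined} to be a homeomorphism with rational components in both directions.
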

 

 \begin{proof}
  The equation 
  \[
  (x^2 + y^2 + z^2 + 5^2 - 1^2)^2 = 4 \cdot 5^2(x^2 + y^2)\]
  describes a torus. 
  This is a compact semi-algebraic set.
  By Theorem~\ref{thm:main} 
  exists an homeomorphic \etrinv
  formula. 
 \end{proof}

\subsection{Related Work}
\label{sec:RealtedWork}
We want to point out that many early parts of 
this reduction can be 
considered folklore. 
Already in 1876 
Kempe~\cite{Kempe1876} exposed ideas
which are the core of our
reduction in Section~\ref{sec:ami}. 
Other important work was done by Shor~\cite{shor1991stretchability} with his
contribution to \ER-hardness of stretchability.
See also~\cite{mnev1988universality, blum2012complexity, DBLP:journals/mst/SchaeferS17,Schaefer2010}.
 
\subsection{Universality-type theorems}
\label{sec:Universality}
A universality statement says that for every
object of type~A exists an object of type~B
such that A and B are equivalent in some sense~C.
In Theorem~\ref{thm:main}, we had compact 
semi-algebraic sets as objects of type~A,
\etrinv formulas are objects of type~B,
and we preserve algorithmic, topological and
algebraic properties. 
Given our reductions, it is possible to make
such statements by replacing~A,~B and~C by 
something else. We think that rational equivalence
and compact semi-algebraic sets are good choices
for two reasons.
First, compact semi-algebraic sets are very versatile.
Of course, it would be nicer to have general, 
rather than compact, semi-algebraic sets, but 
\etrinv cannot encode any open set, so this is not conceivable.
The second reason is that rational equivalence
seems to preserve topological and algebraic properties
in a very strong sense and we are not aware 
how to improve this further.
In this paper, whenever, we refer to polynomials, we
implicitly assume the polynomials to be multivariate
with integer coefficients.

 \begin{definition}[Rational Equivalence]
 Consider two sets $V\subseteq \R^n$ and $W\subseteq \R^m$ and a function $f:V\rightarrow W$.
  We say that $f$ is a \emph{homeomorphism} if it is
  continuous, invertible, and its inverse $f^{-1}$ is 
  continuous as well.
  
  Write $f$ as its components $(f_1,\ldots,f_m)$, where $f_i:V\longrightarrow\R$ for each $i\in\{1,\ldots,m\}$.
  Then $f$ is \emph{rational} if each function $f_i$ is the ratio of two polynomials, with integer coefficients. 
  
  The sets $V$ and $W$ are \emph{rationally equivalent} 
 if there exists a homeomorphism
 $f : V \longrightarrow W$ 
  such that both $f$ and $f^{-1}$ are rational functions.
  In that case, we write $V\simeq W$.
  \end{definition}


 Be aware that the term \linearExtension, that we define below, has
 other meanings in other contexts.

  \begin{definition}[\linearExtension]
 Given two sets $V\subseteq \R^n$ and $W\subseteq \R^m$, we say that $W$ is a
 \emph{\linearExtension} of $V$ if
 there is an orthogonal projection $\pi:W\longrightarrow V$
 and two vectors $a,b\in \Q^{n}$ 
 such that the mapping 
 \[x \mapsto a\cdot \pi (x) + b\]
 is a continuous bijection.
 In this case we write $V\leqlin W$.
  \end{definition}

\begin{remark}
Rational equivalence linear extension are both transitive relations, i.e., if $V\simeq W$ and $W\simeq U$, then $V\simeq U$, and
if $V\leqlin W$ and $W\leqlin U$, then $V\leqlin U$.

   Furthermore, for a compact set $V$, if $V\simeq W$ or $V\leqlin W$, then $W$ is compact as well.
\end{remark}

  
  
  
  \begin{example}
  Let 
  $V \mydef [1,2]$ and $W \mydef  \{(x,y): x\in [1,2], x = y^2 \}$.
  Then $W$ is not a linear extension of $V$, and  $V$ and $W$
  are neither rationally equivalent, as $W$ has two connected components.
  \end{example}
  
  
 %
 %
 
\subsection{Naming the Variables}
\label{sec:Notation}
We typically denote a new variable with a multi-character symbol such as $\const{v}$.
Here, $v$ is an expression involving already known quantities, and $\const{v}$ should be thought of as a placeholder with that value.
In the case that $v$ is a constant, e.g., $v=2$ such that the variable is $\const{2}$, then we say that $\const{v}$ is a \emph{constant variable}.
By \emph{constructing} a constant variable $\const{v}$ in an \etr formula $\Phi$, we mean introducing variables and constraints to $\Phi$ such that it follows that in every solution to $\Phi$, we have $\const{v}=v$.

The expression $v$ of a variable $\const{v}$ can also involve other parameters such as for instance $\const{x^2+2x+1}$.
In that case $\const{x^2+2x+1}$ should be thought of as a variable holding the value $x^2+2x+1$, where $x$ is another variable or a parameter of the problem.
It should follow from the assumptions or constraints introduced in the concrete case that $\const{x^2+2x+1}$ indeed has the value $x^2+2x+1$ in any solution to $\Phi$.

\section{Reduction to Conjunctive Form}
\label{sec:Conjunction}

\begin{definition}
An \etrconjunction formula $\Phi\mydef\Phi(x_1,\ldots,x_n)$ is a conjunction $C_1\land\ldots\land C_m$,
where $m\geq 0$ and each $C_i$ is of one of the two forms
\begin{align*}
x\geq 0,\quad p(y_1,\ldots,y_l) = 0
\end{align*}
for $x,y_1,\ldots,y_l \in \{x_1, \ldots, x_n\}$ and $p$ is a polynomial.
\end{definition}

Note that since there are no strict inequalities in a formula $\Phi$ in \etrconjunction, the set $V(\Phi)$ is closed.

We show how to reduce a general \ETR formula
to an \etrconjunction formula.
The reduction preserves rational equivalence and runs in linear time.
A similar reduction has been described by Schaefer and \v{S}tefankovi\v{c}~\cite{DBLP:journals/mst/SchaeferS17}.

\begin{lemmaAlph}
\label{lem:Reduction-ETR-Conjunction}
  Given an \etr formula $\Phi$, we can in $O(|\Phi|)$ time compute an \etrconjunction formula $\Psi$ such that $V(\Phi) \simeq V(\Psi)$ and $V(\Phi) \leqlin V(\Psi)$.
\end{lemmaAlph}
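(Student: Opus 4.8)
The plan is to process the formula $\Phi$ in two stages: first normalize the Boolean structure and the atoms so everything becomes a conjunction of polynomial equalities and sign conditions, then show that the auxiliary variables introduced to achieve this make $V(\Psi)$ a graph of rational functions over $V(\Phi)$, which gives both the rational equivalence and the linear extension. I would start with the Boolean structure. A general \etr formula $\Phi$ is a quantifier-free Boolean combination of atoms of the form $p \bowtie 0$ with $\bowtie \in \{=, <, \le\}$. Eliminating negations is routine (push them to the atoms, using that $\lnot(p<0) \equiv (p \ge 0)$, $\lnot(p\le 0)\equiv (p>0)$, $\lnot(p=0)\equiv (p>0)\lor(-p>0)$), so we may assume $\Phi$ is built from atoms using only $\land$ and $\lor$. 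The disjunctions are the only obstacle to a pure conjunction; the standard trick is to introduce, for each disjunction $A \lor B$ occurring in $\Phi$, a fresh ``selector'' variable and replace the disjunction by a product constraint. Concretely, for a node $A\lor B$ I would introduce a variable $s$ with the constraint $s\cdot(1-s)=0$ (forcing $s\in\{0,1\}$) is \emph{not} quite enough because it disconnects things; instead, since only compactness of $V(\Psi)$ is needed for the remark but $V(\Phi)$ need not be compact here, I follow the approach that avoids a discrete selector: encode $A \lor B$, where $A$ is $p\ge 0$ (resp.\ $p = 0$) and $B$ similarly, by introducing slack variables and a single product. The cleanest version: a disjunction of equalities $p_1 = 0 \lor \cdots \lor p_k = 0$ is equivalent to the single equality $p_1\cdots p_k = 0$; and an inequality $p \le 0$ is equivalent to $\exists t\, (p + t^2 = 0)$ while $p \ge 0$ is $\exists t\,(p - t^2 = 0)$, so every atom can be turned into an equality at the cost of one new variable $t$. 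Applying this bottom-up, every atom becomes an equality, every disjunction becomes a product of the corresponding polynomials set to zero, and every conjunction stays a conjunction — yielding an \etrconjunction formula $\Psi$ (after also throwing in the trivial constraints $x\ge 0$ only where genuinely required; in fact with the $t^2$ trick we may not need the sign-constraint form at all, but it is available).

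Next I would verify the two equivalences. Let the new variables be $t_1,\dots,t_r$ (the square-root slacks) and note that for each slack we imposed $p_j \pm t_j^2 = 0$, i.e.\ $t_j^2$ is a fixed polynomial in the original variables, namely $t_j^2 = \mp p_j(\mathbf x)$. The subtlety is the sign ambiguity $t_j = \pm\sqrt{\cdot}$, which a priori doubles the number of points in a fibre and destroys injectivity. I would handle this by the standard device of recording $t_j$ only through a quantity that determines its sign: instead of a bare slack $t_j$, introduce the pair and force $t_j \ge 0$ via a constraint, OR — better, to stay in \etrconjunction where we \emph{do} have the atom $x\ge 0$ available — simply add the constraint $t_j \ge 0$. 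Then on $V(\Psi)$ each $t_j$ equals the nonnegative square root $\sqrt{\mp p_j(\mathbf x)}$, which is a continuous (though not rational) function of $\mathbf x$. This is where I must be slightly careful: a continuous square root is not a \emph{rational} function, so as stated the projection $\pi:V(\Psi)\to V(\Phi)$ is a homeomorphism but not obviously a rational equivalence. The fix used in these reductions is to carry along $t_j$ together with $t_j^2$ already expressed as the known polynomial, and to observe that we only ever need $t_j$ where $p_j$ ranges over a set on which $\sqrt{\cdot}$ is real-analytic, OR to replace the "$p\le 0$" encoding by "$p + t = 0, \ t\ge 0$" (a genuinely linear, hence rational, substitution) for inequalities, reserving the $t^2$ trick purely for turning a disjunction-of-signs into a product. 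Using $p+t=0,\ t\ge 0$ for every inequality atom, the map $\mathbf x \mapsto (\mathbf x, \mathbf t) = (\mathbf x, (-p_1(\mathbf x))_j,\dots)$ is polynomial, and its inverse is the coordinate projection, also polynomial; so $V(\Phi)\simeq V(\Psi)$ rationally, and since the inverse is an orthogonal coordinate projection with the forward map affine (indeed $\mathbf x$ sits in $\mathbf x$-coordinates of $V(\Psi)$ unchanged), taking $a,b$ to pick out the original coordinates gives $V(\Phi)\leqlin V(\Psi)$.

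Finally I would bound the running time: each of the transformations — De Morgan normalization, replacing each $\lor$-node by a product, replacing each atom $p\bowtie 0$ by an equality with at most one new variable and one new conjunct — touches each node of the syntax tree of $\Phi$ a constant number of times and increases the formula size by at most a constant factor, so the whole procedure runs in $O(|\Phi|)$ time and $|\Psi| = O(|\Phi|)$. I expect the \textbf{main obstacle} to be precisely the rationality bookkeeping around the slack variables: one must be disciplined about using the linear substitution $p + t = 0$ (plus $t\ge 0$) rather than the quadratic $p + t^2 = 0$ wherever a $1$-to-$1$ rational correspondence is needed, and only using products-of-polynomials (which never introduce a new variable at all) to collapse disjunctions; getting this division of labour right is the entire content of the lemma, the rest being syntactic routine. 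I would also remark, matching the paper's later comments, that $V(\Psi)$ is closed because no strict inequalities survive, and that compactness of $V(\Phi)$ (when it holds) is inherited by $V(\Psi)$ since the slacks $t_j = -p_j(\mathbf x)$ are bounded continuous images of a compact set.
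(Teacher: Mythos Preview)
Your plan has a genuine gap: you never say how to dispose of \emph{strict} inequalities. After pushing negations you yourself note that atoms of the form $p>0$ appear (from $\lnot(p\le 0)$ and from $\lnot(p=0)\equiv(p>0)\lor(-p>0)$), yet the only conversions you offer are for non-strict atoms: the linear slack $p-t=0,\ t\ge 0$ encodes $p\ge 0$, not $p>0$, and the quadratic slack $p-t^2=0$ does the same (and you correctly reject it for rationality reasons anyway). The step you are missing is precisely the one the paper isolates: replace $q>0$ by $q\cdot z-1=0$ together with $z\ge 0$, which forces $z=1/q$, a \emph{rational} function of the original variables, and recovers $q>0$ exactly. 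The paper even makes a remark that the more familiar substitution $q\cdot z^2=1$ would destroy both injectivity (the sign of $z$ is free) and rationality (you would need a square root), which is exactly the trap you spend a paragraph worrying about for $\ge$ without ever resolving it for $>$.

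There is a second, smaller gap in the Boolean bookkeeping. You say ``every conjunction stays a conjunction'' and that a disjunction of equalities collapses to a product, but you do not treat conjunctions sitting \emph{inside} a disjunction: to collapse $(\,p=0\land q=0\,)\lor(\,r=0\,)$ to a product you must first turn $p=0\land q=0$ into a single equation. Writing $p\cdot p+q\cdot q=0$ directly duplicates $p$ and $q$, and iterating that along a deep $\land/\lor$ tree blows up the length, breaking your $O(|\Phi|)$ claim. The paper avoids this by introducing fresh variables $x,y$, using the constant-size equation $x\cdot x+y\cdot y=0$ in place, and appending $p-x=0$ and $q-y=0$ at the top level; besides ensuring linear size, this also makes the new variables polynomial in the old ones, which is what you need for both $V(\Phi)\simeq V(\Psi)$ and $V(\Phi)\leqlin V(\Psi)$.
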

\begin{proof}
 We start with an \etr formula $\Phi$
 and modify it repeatedly to attain an \etrconjunction formula
 $\Psi$. 
 Each modification leads
 to an equivalent formula.
 Our modifications can be summarized in four steps.
 (1) Delete ``$\lnot$''. 
 (2) Delete ``$>$''.
 (3) Move ``$\geq$'' to variables only.
 (4) Delete ``$\lor$''.
%
 In the rest of this proof $p$ and $q$ denote polynomials.
 
 Step (1): Here, we merely ''pull`` every negation $\lnot$
 in front of every atomic predicate. 
 For instance 
 $\lnot(A\lor B\lor C)$ becomes 
 $(\lnot A\land \lnot  B\land \lnot C)$.
 To see that this can be done in linear time,
 note that the length of $\Phi$ is at least
 the number of atomic predicates.
 In the end of this process, every atomic predicate
 is preceded by either a negation or not. 
 It may be that $\land$ and $\lor$ symbols are swapped,
 but we count both as one symbol. 
 
 Thereafter each 
 atomic predicate preceded by $\lnot$ is replaced as follows:
 \begin{align*}
  \lnot (q > 0) \quad &\mapsto \quad -q \geq 0\\
  \lnot (q = 0) \quad &\mapsto \quad (q > 0) \,  \lor \,  (-q > 0) \\
  \lnot (q \geq 0)\quad  &\mapsto \quad -q > 0
 \end{align*}
 Those replacements are done repeatedly until there
 are no occurrence of ``$\lnot$'' left in the formula.
 
 Step (2): We replace each 
 inequality as follows:
 \[q > 0 \quad \mapsto  \quad (q\cdot z-1 = 0) \ldots \land z\geq 0.\]
 The dots indicate that the predicate $z\geq 0$ does not 
 immediately follow after $(q\cdot z-1 = 0)$, but will be 
 adjoined at the end of the new formula.
 Furthermore, $z$ denotes a new variable.
 Those replacements are done repeatedly till there
 are no occurrence of ``$>$'' left in the formula.
 
 Step (3): We replace all atomic predicates
 of the form 
 $q \geq 0$
 by the predicate
 $q - z = 0$ and adjoin a new predicate
 $z\geq 0$ at the the end of the formula.
 Again $z$ denotes a new variable.

 Step~(4): We delete disjunctions as follows.
 It will also be necessary to replace some conjunctions.
 Let $\Phi$ be the formula
 after Step~(1)--(3).
 Suppose that there is a disjunction somewhere in $\Phi$, and write it as $\Phi_1\lor\Phi_2$ for two sub-formulas $\Phi_1$ and $\Phi_2$.
 Note that $\Phi_1\lor\Phi_2$ might just be a small part of $\Phi$ -- there will in general be more of $\Phi$ to the right and left of this part.
 
 We want to reduce each of $\Phi_i$ to a single polynomial equation, as follows.
 Note that since we have already performed Step~(1)--(3), there are no inequalities in $\Phi_i$.
 Suppose that $\Phi_i$ is not already a single polynomial equation.
 Then there must somewhere in $\Phi_i$ be either (i) a disjunction $p=0\lor q=0$ or (ii) a conjunction $p =0 \land q = 0$.
 We now explain how to reduce each of these cases to a simpler case.
 
 \begin{itemize}
  \item \textbf{Case (i):}
 We make the replacement
  \[ p =0 \lor q = 0 \quad   \mapsto \quad p\cdot q=0.\]

 \item \textbf{Case (ii):}
 We make the replacement
 \[ p =0 \land q = 0 \quad  \mapsto \quad 
 x\cdot x +  y\cdot y = 0 \ldots 
 \land p -x = 0 \, \land\,  q -y = 0  .\]
 Here, $x$ and $y$ are new variables.
 As in Step~(2), the part following the dots is appended at the end of the complete formula~$\Phi$.
   \end{itemize}
 
 Eventually, we have reduced each $\Phi_i$ to a single polynomial equation.
 Thus the original disjunction $\Phi_1\lor \Phi_2$ has the form as in Case~(i), and we apply the replacement rule described there.
 
 At first, it might seem easier in Case~(ii) 
 to replace
 $p =0 \land q = 0$ by $p\cdot p+q\cdot q = 0$.
 However, we want our reduction to be linear and
 the simplified step could, if done repeatedly,
 lead to very long formulas.
 
 With the replacement rules we have suggested, each iteration reduces the number of 
 disjunctions and conjunctions by one and increases the length of the formula by at most a constant.
 Those replacements are done repeatedly till there
 are no disjunctions left in the formula.

 This reduction takes linear time and the
 final formula $\Psi$ is in conjunctive form.
 We need to describe a rational function
 \[f : V(\Phi) \rightarrow V(\Psi).\]
 Note that $\Psi$ has all the original variables $x_1,\ldots,x_n$
 of $\Phi$ plus some additional variables,
 which we denote by $z_1,\ldots,z_k$.
 If $z\in \{z_1,\ldots,z_k\}$ is introduced
 in step (2), it is assigned the value
 $z = \frac{1}{q}$ and if it is introduced in 
 step (3) or (4), it is assigned the value $z = q$ for some polynomial $q$.
 This defines $f$. Assume that $\Psi$ has
 the free variables $x_1,\ldots,x_n,z_1,\ldots,z_k$,
 where $z_1,\ldots,z_k$ are the variables introduced by the reduction.
 Then \[f^{-1} : (x_1,\ldots,x_n,z_1,\ldots,z_k) \ \mapsto \ (x_1,\ldots,x_n).\]
 Thus $f$ and $f^{-1}$ are
 rational bijective functions.
 Thus $f$ is a homeomorphism.
 The description of $f^{-1}$ implies that 
 $V(\Psi)$ is a linear extension of $V(\Phi)$.
 
\end{proof}

\begin{remark}
 Note that the standard way to remove strict inequalities
 is \[q>0 \mapsto q\cdot y\cdot y -1 = 0.\]
 However, this implies that $y = \pm \sqrt{1/q}$.
 This transformation has two issues. 
 First, the number of solutions in a sense doubles, as the sign of $y$ is not fixed.
 Second, irrational solutions are introduced
 where before may have been only rational solutions.
\end{remark}

\section{Reduction to Compact Semi-Algebraic Sets}
\label{sec:CompactSets}
\begin{definition}
In the problem \etrcompact, we are given an \etrconjunction formula $\Phi$ with the promise that $V(\Phi)$ is compact.
The goal is to decide if $V(\Phi)$ is non-empty.
\end{definition}

In this section, we describe a reduction from 
\etrconjunction to \etrcompact.
We need a tool from 
real algebraic geometry.
The following corollary has been pointed out 
by Schaefer and \v{S}tefankovi\v{c}~\cite{DBLP:journals/mst/SchaeferS17} in a simplified form.

\begin{corollary}
[Basu, Roy~\cite{basu2010bounding} Theorem~2]
\label{cor:BallIntersect}
Let $\Phi$ be an instance of \etr of complexity $L\geq 4$ such that $V(\Phi)$ is a non-empty subset of $\R^n$.
Let $B$ be the set of points in $\R^n$ at distance at most $2^{L^{8n}} = 2^{2^{8n\log L}}$ from the origin.
Then $B\cap V(\Phi)\neq\emptyset$.
\end{corollary}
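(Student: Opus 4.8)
The plan is to obtain this as a direct consequence of the main quantitative result of Basu and Roy~\cite{basu2010bounding}. Their theorem asserts that, for a finite family $\mathcal P$ of polynomials in $k$ variables with integer coefficients of bitsize at most $\tau$ and degrees at most $d$, there is a ball centred at the origin of radius $2^{\tau\, d^{c k}}$, for a suitable absolute constant $c$, that meets every connected component of the realization of every realizable sign condition on $\mathcal P$. So the real work is purely bookkeeping: translate the single complexity parameter $L=|\Phi|$ into bounds on $k$, $\tau$ and $d$, and then check that the resulting radius is at most $2^{L^{8n}}$.

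First I would read off the parameters from the syntax of $\Phi$. Written in the standard way over the alphabet fixed in the introduction — every atomic predicate of the form $p\bowtie 0$ with $\bowtie\in\{=,<,\le\}$ and every polynomial $p$ spelled out with $+,-,\cdot,(,),0,1$ — the formula $\Phi$ has at most $L$ distinct variables, so the ambient dimension is $n\le L$; the atomic predicates give a family $\mathcal P$ of at most $L$ polynomials; and each polynomial in $\mathcal P$ is built from at most $L$ symbols, hence has degree at most $L$ and integer coefficients of bitsize at most $L$. (If $\Phi$ uses auxiliary existentially quantified variables and extra conjuncts to encode large powers by repeated squaring, the individual constraints only get \emph{smaller} degree, so $d\le L$ still holds; this is exactly why one counts description complexity rather than degree.) Thus we may take $k=n$, $\tau\le L$ and $d\le L$.

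Second, I would substitute into the Basu--Roy bound. The key observation is that the number of variables $n$ sits in the \emph{exponent of the exponent} and must be kept as $n$, whereas $\tau$ and $d$ sit at the base of the tower and are absorbed into $L$: the radius is at most
\[
2^{\tau\, d^{c n}} \;\le\; 2^{L\cdot L^{c n}} \;=\; 2^{L^{c n+1}} \;\le\; 2^{L^{8n}},
\]
where the last inequality holds whenever $c n + 1 \le 8 n$, i.e. (since $L\ge 4>1$) whenever $c\le 7$; the constant in~\cite{basu2010bounding} is comfortably below this, and the hypothesis $L\ge 4$ leaves further slack.

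Finally, I would connect this back to $V(\Phi)$ itself. Since $V(\Phi)$ is the realization of a Boolean combination of the atomic predicates on $\mathcal P$, it is a union of realizations of sign conditions on $\mathcal P$; as $V(\Phi)\neq\emptyset$, one such realization $C\subseteq V(\Phi)$ is non-empty, and $C$ has a connected component, which by the Basu--Roy theorem meets the ball $B$ of radius $2^{L^{8n}}$. Hence $\emptyset\neq B\cap C\subseteq B\cap V(\Phi)$, as claimed. The only genuinely delicate point is the constant-chasing in the exponent — making sure the Basu--Roy exponent constant really lands at $8n$ and not a larger multiple of $n$ in the normalization being used, and double-checking that no syntactic convention (binary constants, succinctly written powers) smuggles a super-polynomial degree past the bound $d\le L$. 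Everything else is routine; this is the route taken, in a streamlined form, by Schaefer and \v{S}tefankovi\v{c}~\cite{DBLP:journals/mst/SchaeferS17}.
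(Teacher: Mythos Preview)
The paper does not actually give a proof of this corollary: it is stated as a direct citation of Basu and Roy~\cite{basu2010bounding}, Theorem~2, in the simplified form pointed out by Schaefer and \v{S}tefankovi\v{c}~\cite{DBLP:journals/mst/SchaeferS17}. Your derivation --- bounding the number of variables, degrees, and coefficient bitsizes by the complexity $L$ and plugging into the Basu--Roy radius bound --- is exactly the bookkeeping that underlies this simplification, and is essentially what the paper is implicitly invoking by citing those two references; there is nothing further to compare against.
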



\begin{lemmaAlph}
\label{lem:Reduction-Conjunction-Compact}
Given an \etrconjunction formula $\Phi\mydef\Phi(x_1,\ldots,x_n)$, we can in $O(|\Phi|+n\log|\Phi|)$ time create an \etrconjunction formula $\Psi$ such that $V(\Psi)$ is compact and $V(\Phi) \neq\emptyset\Leftrightarrow V(\Psi)\neq\emptyset$.
In other words, there is a reduction from \etrconjunction to \etrcompact in near-linear time.
\end{lemmaAlph}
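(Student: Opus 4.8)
The plan is to take an \etrconjunction formula $\Phi$ and intersect $V(\Phi)$ with a large ball, so that the resulting set is bounded and hence compact (it is already closed because there are no strict inequalities), while preserving emptiness. The key point is that Corollary~\ref{cor:BallIntersect} tells us that if $V(\Phi)$ is non-empty, then it already meets the ball $B$ of radius $R \mydef 2^{L^{8n}}$ around the origin, where $L = |\Phi|$. So $V(\Phi) \neq \emptyset$ if and only if $V(\Phi) \cap B \neq \emptyset$. It therefore suffices to add to $\Phi$ a constraint equivalent to $x_1^2 + \cdots + x_n^2 \leq R^2$, written in \etrconjunction form: introduce one fresh slack variable $s$ and the conjuncts $s \geq 0$ and $x_1^2 + \cdots + x_n^2 + s - R^2 = 0$. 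The resulting formula $\Psi$ has $V(\Psi) = V(\Phi) \cap B$ (after projecting away $s$, which is uniquely determined), which is compact, and $V(\Psi) \neq \emptyset \Leftrightarrow V(\Phi) \neq \emptyset$.

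First I would spell out how to write the constant $R^2 = 2^{2 L^{8n}}$ inside an \etr formula without blowing up the size. Writing $2^k$ naively as a sum of $k$ ones costs $\Theta(k)$ symbols, and here $k = 2L^{8n}$ is exponential in $n\log L$, which is far too large. Instead I would use repeated squaring: construct a short chain of constant variables $\const{2}, \const{2^2}, \const{2^4}, \ldots, \const{2^{2^j}}$ via conjuncts of the form (next) $=$ (current)$\cdot$(current), so that $\const{2^{2^j}}$ is built with $O(j)$ symbols; then $2^{2^{8n\log L}}$ and hence $R^2$ can be assembled from $O(\log(n\log L)) = O(\log n + \log\log L)$ such variables, using a binary expansion of the exponent $2L^{8n}$. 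Being a bit generous and rounding the exponent up to the next power of two only makes the ball larger, which is harmless. This bookkeeping is exactly where the additive $n\log|\Phi|$ term in the running time comes from: the exponent has $\Theta(\log(n\log L)) = \Theta(\log n + \log \log L)$ bits, each contributing an $O(\log L)$-digit binary constant $L = 2^{8n\log L}$ — more carefully, encoding the radius and the sum $x_1^2+\cdots+x_n^2$ takes $O(n + \text{poly}\log)$ symbols, and combined with copying $\Phi$ itself we get $O(|\Phi| + n\log|\Phi|)$. I would state the constant-construction as a small self-contained sub-claim (repeated squaring) so the main argument stays clean.

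Next I would verify the three required properties of $\Psi$. (a) $\Psi$ is an \etrconjunction formula: all added conjuncts are either $s\geq 0$ or polynomial equations, and the constant-construction conjuncts are polynomial equations too. (b) $V(\Psi)$ is compact: it is closed since $\Psi$ has no strict inequalities (as remarked after the definition of \etrconjunction), and it is bounded because the conjunct $\sum x_i^2 + s = R^2$ together with $s\geq 0$ forces $\sum x_i^2 \leq R^2$ and pins down each constant variable and $s$ to a fixed value; hence $V(\Psi)$ lies in a bounded box. (c) Emptiness is preserved: projecting $V(\Psi)$ onto the $x$-coordinates gives exactly $V(\Phi)\cap B$, which is non-empty iff $V(\Phi)$ is, by Corollary~\ref{cor:BallIntersect} (the hypothesis $L\geq 4$ can be assumed, or handled by a trivial padding of $\Phi$). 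One should also note the edge case $n=0$ or $V(\Phi)=\emptyset$, where the statement is immediate.

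The main obstacle is purely the encoding of the astronomically large radius: it must be written with only $O(n\log|\Phi|)$ symbols, which rules out any unary or even ordinary binary representation of $2^{2^{8n\log L}}$ and forces the repeated-squaring trick, and one must track the symbol count carefully to land on the claimed near-linear bound. The geometric and logical parts — adding a ball constraint, checking closedness and boundedness, and invoking Basu--Roy via Corollary~\ref{cor:BallIntersect} — are routine once the constant is in hand. Note that this reduction does \emph{not} preserve rational equivalence or linear extension in general (the set genuinely changes when $V(\Phi)$ is unbounded), which is why the lemma only claims preservation of non-emptiness; this is consistent with Figure~\ref{fig:ReducitonPath}, where this is one of the few reductions not labelled as preserving rational equivalence.
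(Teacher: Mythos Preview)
Your approach is essentially the paper's: invoke Corollary~\ref{cor:BallIntersect} to justify intersecting $V(\Phi)$ with a huge bounded region, and build the astronomically large constant by repeated squaring. The only cosmetic difference is that the paper bounds each variable separately (a box $[-2^{2^k},2^{2^k}]^n$ with $k\mydef\lceil 8n\log L\rceil$) rather than using a single ball constraint; both work.

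There is, however, a genuine slip in your running-time analysis. Building $\const{2^{2^j}}$ by the squaring chain $\const{2^{2^0}},\const{2^{2^1}},\ldots,\const{2^{2^j}}$ takes $j$ steps, not $\log j$ steps. Here the target is $R=2^{2^{8n\log L}}$, so $j=\Theta(n\log L)$, not $\Theta(\log(n\log L))$ as you wrote. Thus the constant construction costs $\Theta(n\log L)$ symbols, and \emph{this}, not the sum $\sum x_i^2$, is exactly the source of the additive $n\log|\Phi|$ term in the lemma. Your sentence ``$R^2$ can be assembled from $O(\log(n\log L))$ such variables'' is off by an exponential; fortunately you have undercounted rather than overcounted, so the final bound $O(|\Phi|+n\log|\Phi|)$ is still correct once this is fixed.
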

\begin{proof}
 Let an instance $\Phi$ of \etrconjunction be given and define $k \mydef \lceil 8n\log L\rceil$.
 To make an equivalent formula $\Psi$ such that $V(\Psi)$ is compact, we start by including all the variables and constraints of $\Phi$ in $\Psi$.
We then construct a large constant variable $\const{2^{2^k}}$ using \emph{exponentiation by squaring}.
 \begin{align*}
  \const{2^{2^0}} -1-1 &= 0 \\
  \const{2^{2^1}} - \const{2^{2^0}}\cdot \const{2^{2^0}} &= 0 \\
		\vdots & \\
	  \const{2^{2^k}} - \const{2^{2^{k-1}}}\cdot \const{2^{2^{k-1}}} &= 0
 \end{align*}
	For each variable $x$ of $\Phi$, we now introduce the variables $\const{x-2^{2^k}}$ and $\const{x-2^{2^k}-x}$ and the constraints
	\begin{align*}
		\const{x+2^{2^k}}-x-\const{2^{2^k}} & =0 \\
		\const{x+2^{2^k}} & \geq 0 \\
		\const{2^{2^k}-x}-\const{2^{2^k}}+x & =0 \\
		\const{2^{2^k}-x} & \geq 0.
	\end{align*}
Note that this corresponds to introducing the constraint $-2^{2^k}\leq x\leq 2^{2^k}$ in $\Psi$.
 
 It now follows by Corollary~\ref{cor:BallIntersect} that
 $$V(\Phi)\neq\emptyset\Leftrightarrow V(\Psi)\neq\emptyset.$$
 Note that $V(\Psi)$ is compact since $\Psi$ contains no strict inequalities and each variable is bounded.
 This finishes the proof.
\end{proof}

\begin{remark}
 Unfortunately, we do not have $V(\Phi)\simeq V(\Psi)$ in the above reduction.
 That is not possible as it would imply, together with Lemma~\ref{lem:Reduction-ETR-Conjunction}, that an open subset of $\R^n$ is homeomorphic to a compact set.
 We can also not hope for the reduction to yield a linear extension, as a bounded set cannot be a linear extension of an unbounded one.
\end{remark}

\section{Reduction to \etrami}
\label{sec:ami}

\etrami is an abbreviation for
\textit{{\bf E}xitential {\bf T}heory of the {\bf R}eals with
{\bf A}ddition, {\bf M}ultiplication, and {\bf I}nequalities}.

\begin{definition}
\label{def:etrami}
An $\etrami$ formula $\Phi\mydef\Phi(x_1,\ldots,x_n)$ is a conjunction $C_1\land\ldots\land C_m$, where $m\geq 0$ and each $C_i$ is a constraint of one of the forms
$$
 x+y=z,\quad x\cdot y=z, \quad x\geq 0, \quad x=1
$$
for $x,y,z \in \{x_1, \ldots, x_n\}$.
\end{definition}

\begin{lemmaAlph}[\etrami Reduction]
\label{lem:Reduction-AMI-INV}
Given an instance of \etrcompact defined by a formula $\Phi$, we can in $O(|\Phi|)$ time construct
 an \etrami formula $\Psi$ such that
 $V(\Phi) \simeq V(\Psi)$
 and 
 $V(\Phi) \leqlin V(\Psi)$.
\end{lemmaAlph}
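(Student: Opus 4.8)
The plan is to start from an \etrcompact formula $\Phi$, whose constraints are of the two forms $x\geq 0$ and $p(y_1,\ldots,y_l)=0$, and to break every polynomial equation down into a sequence of elementary arithmetic gates $x+y=z$ and $x\cdot y=z$, together with a constant gate $x=1$. First I would handle the constant gate: the constraint $x=1$ in \etrconjunction can already be produced, but if needed I can synthesize $\const 1$ from $\const 1\cdot\const 1=\const 1$ together with a nondegeneracy trick, or simply allow $x=1$ as a primitive (it is in Definition~\ref{def:etrami}). Next, for each polynomial constraint $p(y_1,\ldots,y_l)=0$, I would rewrite $p$ in a straight-line-program form: introduce a fresh auxiliary variable for every intermediate monomial and partial sum, so that each step is a single addition or multiplication of two already-available variables, and the final variable equals $p$; then add the constraint ``$\const p=0$,'' which can be encoded as $\const 0+\const 0=\const p$ using a constant zero variable, or by observing that $x\cdot\const 0=\const 0$ forces things. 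Since addition and multiplication gates only take \emph{two} inputs, a degree-$d$, $l$-variable polynomial with $t$ monomials is turned into $O(|\Phi|)$ gates overall, keeping the reduction linear — this is the same bookkeeping as in the classical arithmetic-circuit encoding of \etr.

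The subtlety that the compactness promise is there to handle is \textbf{subtraction}: \etrami has no subtraction gate, yet polynomials have negative coefficients and $p=0$ morally needs ``$a-b=0$.'' I would deal with this exactly as one rewrites $a-b=c$ as $b+c=a$: a constraint $u=v$ (or $u-v=0$) becomes $v+\const 0=u$ after constructing $\const 0$, and negative coefficients are absorbed by moving terms to the other side of the equation, i.e. writing $p=0$ as $(\text{sum of positive-coefficient monomials})=(\text{sum of negative-coefficient monomials})$ and equating the two nonnegative sides through a common auxiliary variable. The reason this requires no cleverness beyond \etrconjunction is that both sides are genuine variables in the straight-line program; the only place boundedness could in principle matter is if one wanted every auxiliary variable to stay in a fixed range, but at this stage we are not yet claiming a range promise — that is imposed later in the reduction chain (cf. Figure~\ref{fig:ReducitonPath}) — so here compactness of $V(\Phi)$ is simply inherited: $V(\Psi)$ is cut out by equalities and $x\geq 0$ constraints, hence closed, and every original variable is bounded because $V(\Phi)$ is, while every auxiliary variable is a polynomial function of the originals and therefore also bounded, so $V(\Psi)$ is compact.

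For the topological and algebraic part I would argue just as in Lemma~\ref{lem:Reduction-ETR-Conjunction}: $\Psi$ has all the original variables $x_1,\ldots,x_n$ of $\Phi$ plus auxiliary variables $z_1,\ldots,z_k$, and each $z_j$ is assigned, by its defining gate, a fixed polynomial value $z_j=q_j(x_1,\ldots,x_n)$. Define $f:V(\Phi)\to V(\Psi)$ by $f(\mathbf x)=(\mathbf x, q_1(\mathbf x),\ldots,q_k(\mathbf x))$ and $f^{-1}$ by the coordinate projection $(x_1,\ldots,x_n,z_1,\ldots,z_k)\mapsto(x_1,\ldots,x_n)$. Both are rational (indeed polynomial) and mutually inverse, so $f$ is a rational homeomorphism and $V(\Phi)\simeq V(\Psi)$; and since $f^{-1}$ is an orthogonal projection, taking $a,b$ so that $a\cdot\pi(x)+b$ is the identity on the original coordinates witnesses $V(\Phi)\leqlin V(\Psi)$. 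The main obstacle, such as it is, is purely organizational: choosing the straight-line decomposition carefully enough that the total number of gates — and hence $|\Psi|$ — is $O(|\Phi|)$ rather than blowing up (e.g. avoiding recomputing shared subexpressions, and using the two-input ``$x\cdot x+y\cdot y$''-style tricks sparingly), so that the running time bound $O(|\Phi|)$ genuinely holds.
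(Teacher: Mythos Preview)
Your proposal is correct and follows essentially the same strategy as the paper: introduce constant variables for $1$ and $0$, decompose each polynomial equation into a straight-line program of binary additions and multiplications, and verify rational equivalence and linear extension via the same projection argument as in Lemma~\ref{lem:Reduction-ETR-Conjunction}. The one substantive difference is how you handle negative coefficients: the paper introduces a constant variable $\const{-1}$ via the constraint $\const{1}+\const{-1}=\const{0}$ and then replaces every occurrence of ``$-$'' by a multiplication with $\const{-1}$, whereas you split each equation $p=0$ into its positive and negative parts and equate them through $\const{N}+\const{0}=\const{P}$. Both routes are valid and both keep the reduction linear; the paper's is slightly more uniform since the straight-line program for $p$ remains a single object.

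Two minor slips are worth flagging. First, the suggestion $\const 1\cdot\const 1=\const 1$ does \emph{not} determine $\const 1$ at this stage --- there is no range restriction yet, so $\const 1=0$ is also a solution --- but you correctly fall back on the primitive $x=1$, which is exactly what the paper uses. Second, the aside ``$x\cdot\const 0=\const 0$ forces things'' is vacuous (every $x$ satisfies it); your primary encoding $\const 0+\const 0=\const p$ for ``$\const p=0$'' is the right one. Finally, your speculation that the compactness promise is ``there to handle subtraction'' is mistaken --- as you yourself end up concluding, nothing in this lemma actually uses compactness; it is merely passed through to the next step in the chain (Lemma~\ref{lem:Reduction-AMI-SMALL}), where it is genuinely required to bound the variables.
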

\begin{proof}
 Recall that $\Phi$ is a conjunction of atomic formulas of the form $p=0$ for a polynomial $p$ and $x\geq 0$ for a variable $x$.
 Each polynomial $p$ may contain minuses, zeros, and ones.
 The reduction has four steps.
 In each step, we make changes to $\Phi$.
 In the end, $\Phi$ has become a formula $\Psi$ with the desired properties.
 In step~(1)--(3), we remove unwanted ones, zeros and minuses by replacing them by constants.
 In step~(4), we eliminate complicated polynomials.

 Step~(1):
 We introduce the constant variable $\const{1}$ and the constraint $\const{1}=1$ to~$\Phi$.
 We then replace all appearances of $1$ with $\const{1}$ in the atomic formulas of the form $p=0$.
 
 Step~(2):
 We introduce the constant variable $\const{0}$ and the constraint $\const{1}+\const{0}=\const{1}$ to~$\Phi$.
 We then replace all appearances of $0$ with $\const{0}$ except in the constraints of the form $x\geq 0$.
 
 Step~(3):
 We introduce the constant variable $\const{-1}$ and the constraint $\const{1}+\const{-1}=\const{0}$ to~$\Phi$.
 We then replace all appearances of minus with a multiplication by $\const{-1}$ in $\Phi$.
 
 Step~(4):
 We replace bottom up every occurrence 
 of multiplication and addition by a new variable
 and an extra addition or multiplication constraint,
 which will be adjoined at the end of the formula.
 Here are two examples of such replacements:
 \begin{align*}
  x_1 + x_2\cdot x_4 +x_5+x_6 = \const{0} \quad & \mapsto\quad x_1 + z_1 +x_5+x_6 = \const{0}\ldots\land z_1=x_2\cdot x_4 \\
  x_1 + z_1 +x_5+x_6 = \const{0} \quad & \mapsto\quad z_2 +x_5+x_6 = \const{0}\ldots\land z_2=x_1+ z_1
 \end{align*}
 In this way every atomic predicate is eventually transformed to atomic predicates of \etrami or is of the form $x=\const{0}$.
 In the latter case, we replace $x=\const{0}$ by $x+\const{0}=\const{0}$.
 
 To see that the reduction is linear, note that
 every replacement adds a constant to 
 the length of the formula.
 Furthermore, at most linearly many 
 replacements will be done.
 
 Let us show that this reduction preserves
 rational equivalence and \linearExtension.
 This is trivial for steps~(1)--(3), as these just introduce constants in order to rewrite polynomials without using zeros, ones, and minuses.
 In Step~(4), we repeatedly make one of two types of steps, replacing either a multiplication or an addition.
 Thus it is sufficient to show that one such step preserves all of those properties.
 Consider a step where we go from $\Phi_1$ to
 $\Phi_2$ and $\Phi_1$ has the variables
 $x_1,\ldots,x_n$ and $\Phi_2$ has the variables
 $x_1,\ldots,x_n,z$, with $z = x_i \odot x_j$.
 Here $\odot$ is either multiplication or addition.
 This defines $f$ as 
 \[(x_1,\ldots,x_n) \quad \mapsto \quad (x_1,\ldots,x_n, x_i \odot x_j),\]
 and $f^{-1}$ is defined by 
 \[(x_1,\ldots,x_n,z) \quad \mapsto \quad (x_1,\ldots,x_n).\]
 Both functions are rational and bijective, and $f^{-1}$ is an orthogonal projection.
 This implies both rational equivalence and \linearExtension between $V(\Phi)$ and $V(\Psi)$.
\end{proof}

\section{Reduction to \etrsmall}
\label{sec:SMALL}
Let $\delta \in(0,1)$ be given.
The definition of \etrsmall depends on $\delta$, but we will suppress $\delta$ in the notation to keep it simpler.
\begin{definition}
\label{def:etrsmall}
An \emph{\etrsmall formula}~$\Phi\mydef \Phi(x_1,\ldots,x_n)$ is a conjunction $C_1\land\ldots\land C_m$, where $m\geq 0$ and each $C_i$ is a constraint of one of the forms
\[
 x+y=z,\quad x\cdot y=z, \quad x\geq 0,\quad x=\delta
\]
for $x,y,z \in \{x_1, \ldots, x_n\}$.
We define $\delta(\Phi)\mydef \delta$.

In the \emph{\etrsmall problem}, we are given an \etrsmall formula $\Phi$ and promised that $V(\Phi) \subset [-\delta,\delta]^n$.
The goal is to decide whether $V(\Phi) \neq \emptyset$.
\end{definition}

We are going to present a reduction from the problem \etrami to \etrsmall.
As a preparation, we present another tool from
real algebraic geometry.
Schaefer~\cite{schaefer2013realizability} made the following simplification of a result from~\cite{basu2010bounding}, which we will use.
More refined statements can be found in~\cite{basu2010bounding}.

\begin{corollary}[\cite{basu2010bounding}]
\label{cor:BallContain}
If a bounded semi-algebraic set in $\R^n$ has complexity at most
$L \geq 5n$, then all its points have distance at most $2^{2^{L+5}}$
from the origin.
\end{corollary}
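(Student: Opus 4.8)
The plan is to deduce this from the effective quantifier-elimination bounds of Basu and Roy~\cite{basu2010bounding}, applied to a one-dimensional projection of the set, in the same spirit as Corollary~\ref{cor:BallIntersect}. Let $S=V(\Phi)\subseteq\R^n$ be bounded with $|\Phi|\le L$ and $L\ge 5n$; we may assume $S\ne\emptyset$, since otherwise there is nothing to prove. It suffices to show that every point of $S$ has Euclidean norm at most $2^{2^{L+5}}$, equivalently that $M\mydef\sup\{x_1^2+\cdots+x_n^2:x\in S\}$ — which is finite because $S$ is bounded — satisfies $M\le\bigl(2^{2^{L+5}}\bigr)^{2}$.

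First I would encode $M$ as the supremum of a semi-algebraic subset of the line. Consider the \etr formula in the single free variable $t$,
\[
\Psi(t)\ \equiv\ \exists x_1\cdots\exists x_n\ \bigl(\,\Phi(x_1,\dots,x_n)\ \land\ x_1^2+\cdots+x_n^2=t\,\bigr),
\]
whose solution set $G\mydef V(\Psi)\subseteq\R$ is nonempty and bounded above with $\sup G=M$, and whose complexity satisfies $|\Psi|\le L+O(n)=O(L)$ since $n\le L/5$. A semi-algebraic subset of $\R$ is a finite union of points and open intervals, so $M$ is either attained in $G$ or is the right endpoint of one of these intervals; in either case $M$ is a real zero of one of the polynomials occurring in a quantifier-free description of $G$ (the alternative being that $G$ is unbounded above, which is excluded).

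Next I would invoke the effective projection theorem of~\cite{basu2010bounding} to bound the degree and the coefficient bit-length of those polynomials. Eliminating the single block of $n$ existential quantifiers from $\Psi$ yields a quantifier-free formula in $t$ whose polynomials have integer coefficients, with degree and bit-length bounded by the Basu--Roy estimates in terms of $n$ and $|\Psi|$. Cauchy's root bound then converts this into an upper bound on $|M|$. The remaining step is the purely arithmetic one: substitute the parameter bounds (number of atoms, degrees, and coefficient sizes all $O(L)$, and $n\le L/5$) and check that everything collapses to the stated form $2^{2^{L+5}}$; this is precisely the simplification recorded by Schaefer~\cite{schaefer2013realizability}, and it is where the hypothesis $L\ge 5n$ is needed, to absorb the lower-order terms in $n$.

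The main difficulty is this last step rather than any new idea: a black-box application of generic quantifier-elimination bounds only gives a bound of the shape $2^{2^{O(n\log L)}}$ — the same form that appears in Corollary~\ref{cor:BallIntersect} — which is not of the stated form, so one genuinely needs the refined dependence on the degrees and the number of polynomials established in~\cite{basu2010bounding}. Since that refined estimate is itself a substantial theorem of real algebraic geometry, I would treat it as a citation; the content of the argument here is only the reduction to a one-variable projection together with the constant-chasing that produces the exponent $L+5$.
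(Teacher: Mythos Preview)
The paper does not prove this corollary at all: it is stated as a black-box citation, with the text immediately preceding it explaining that this is Schaefer's simplification of a bound from Basu and Roy, and the abstract explicitly says that facts from real algebraic geometry are used by reference rather than reproved. So there is no ``paper's own proof'' to compare against; your sketch already does more than the paper attempts.

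As for the sketch itself, the overall shape is reasonable --- project to a one-dimensional quantity, eliminate quantifiers, and apply a root bound --- and you correctly identify that the substance lies entirely in the refined Basu--Roy estimates and Schaefer's arithmetic simplification, both of which you treat as citations. One quibble: your formula $\Psi$ has complexity $L+O(n)$ only if you are careless about what ``complexity'' means here; writing $x_1^2+\cdots+x_n^2=t$ in the alphabet of the paper costs $\Theta(n)$ symbols, which is fine under $L\ge 5n$, but the squaring itself already uses multiplication, so this is consistent. The honest summary is that your write-up and the paper agree on the essential point: this is a quoted result, not something to be reproved in situ.
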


\begin{lemmaAlph}[\etrsmall Reduction]
\label{lem:Reduction-AMI-SMALL}
 %
  Given an \etrami formula $\Phi$ such that $V(\Phi)$ is compact, we can in $O(|\Phi|)$ time construct an instance of \etrsmall defined by a formula $\Psi$ such that 
 $V(\Phi) \simeq V(\Psi)$
 and 
 $V(\Phi) \leqlin V(\Psi)$.
\end{lemmaAlph}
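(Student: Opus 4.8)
### Proof proposal for Lemma~\ref{lem:Reduction-AMI-SMALL}

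The plan is to rescale every variable by a common small factor $\lambda\in(0,1)$ and absorb the resulting distortions into the constraints. Concretely, if $\Phi$ has variables $x_1,\dots,x_n$, I introduce new variables $x_1',\dots,x_n'$ intended to satisfy $x_i' = \lambda x_i$, together with a constant variable $\const{\lambda}$. The map $f:(x_1,\dots,x_n)\mapsto(\lambda x_1,\dots,\lambda x_n)$ (possibly padded with the auxiliary variables discussed below) will be the desired rational homeomorphism; its inverse is the rational map $x_i'\mapsto x_i'/\const{\lambda}$, which is also an orthogonal projection followed by a linear scaling, giving $V(\Phi)\leqlin V(\Psi)$ exactly as in Lemma~\ref{lem:Reduction-AMI-INV}. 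The first key point is to choose $\lambda$ small enough: since $V(\Phi)$ is compact and defined by a formula of complexity $L$, Corollary~\ref{cor:BallContain} gives an explicit a priori bound $R=2^{2^{L+5}}$ on $\|x\|$ for every $x\in V(\Phi)$, so taking $\lambda$ to be (a power of $1/2$ that is) at most $\delta/R$ guarantees $f(V(\Phi))\subseteq[-\delta,\delta]^n$, which is the promise required of an \etrsmall instance.

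Next I must translate the three kinds of arithmetic constraints. An addition constraint $x+y=z$ is scale-invariant: $\lambda x+\lambda y=\lambda z$, so it becomes $x'+y'=z'$ with no overhead. The constant constraint $x=1$ becomes $x'=\const{\lambda}$, and $\const{\lambda}$ itself is built once, at the top of $\Psi$, by exponentiation by squaring (as in Lemma~\ref{lem:Reduction-Conjunction-Compact}): writing $\lambda=2^{-2^{k}}$ for an appropriate $k=O(\log(R/\delta))=O(\log L+\log n)$, I construct $\const{2^{2^{0}}},\dots,\const{2^{2^{k}}}$ and then a reciprocal variable $\const{\lambda}$ via $\const{\lambda}\cdot\const{2^{2^{k}}}=\const{1}$ — but note all of this must be expressed only with the four \etrsmall constraint types and while keeping every intermediate value inside $[-\delta,\delta]$, which forces a little care (see the obstacle below). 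The multiplication constraint $x\cdot y=z$ is the interesting one: under rescaling it reads $\lambda^{2}(xy)=\lambda z$, i.e.\ $x'\cdot y' = \lambda z' $, so I cannot simply write $x'\cdot y'=z'$. Instead I introduce one auxiliary variable $w$ with the two constraints $x'\cdot y' = w$ and $w = \const{\lambda}\cdot z'$; then $w=\lambda^{2}xy=\lambda^{2}z$ and the second constraint forces $z'=\lambda z$, as wanted. Each original constraint thus blows up by only $O(1)$ symbols and $O(1)$ variables, so $|\Psi|=O(|\Phi|)$ and the construction runs in linear time.

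The main obstacle is the \emph{range promise}: \etrsmall demands $V(\Psi)\subseteq[-\delta,\delta]^{n}$, yet the natural gadgets above create intermediate quantities — the tower $\const{2^{2^{j}}}$, the product $x'\cdot y'=\lambda^{2}xy$, and especially the reciprocal $\const{\lambda}=\lambda$ versus the huge $\const{2^{2^{k}}}$ — that are nowhere near $[-\delta,\delta]$. I see two ways to handle this and will pick whichever is cleaner. One option is to build the large tower "small" from the start: never store $2^{2^{j}}$, but store its reciprocal $\const{2^{-2^{j}}}$ directly, using $\const{2^{-2^{j}}} = \const{2^{-2^{j-1}}}\cdot\const{2^{-2^{j-1}}}$, so that only tiny numbers ever appear and $\const{\lambda}=\const{2^{-2^{k}}}$ falls out at the end; the base case $\const{2^{-2^{0}}}=1/2$ is a legitimate small constant, and one checks $1/2\le\delta$ may be arranged by first shrinking once more. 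The product $\lambda^{2}xy$ is automatically in range because $|x|,|y|\le R$ and $\lambda^{2}R^{2}\le(\delta/R)^{2}R^{2}=\delta^{2}\le\delta$, and the auxiliary $w=\const{\lambda}\cdot z'$ is even smaller. The second option, if the first runs into trouble with the $\delta = \const{\delta}$ constant itself, is to note that \etrsmall allows the single constant $\delta$, choose $\delta=\lambda$ outright (the definition lets $\delta$ be any element of $(0,1)$), and then verify the arithmetic closes up. Once the range promise is secured, correctness of $f$ and $f^{-1}$ — rationality, bijectivity, continuity, and the orthogonal-projection structure — is immediate and identical in form to the verification at the end of the proof of Lemma~\ref{lem:Reduction-AMI-INV}, completing the proof.
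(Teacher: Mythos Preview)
Your overall strategy---scale every variable by a tiny rational $\lambda$, build $\const{\lambda}$ by repeated squaring, and handle each product $x\cdot y=z$ with one auxiliary variable carrying $\lambda^2 z$---is exactly the paper's approach. The difficulty you correctly isolate, namely keeping every intermediate value inside $[-\delta,\delta]$ while constructing $\const{\lambda}$, is also the one real issue; unfortunately neither of your two proposed fixes closes it.

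Your first option builds the tower $\const{2^{-2^j}}=\const{2^{-2^{j-1}}}\cdot\const{2^{-2^{j-1}}}$ starting from $\const{1/2}$. But in \etrsmall the only available constant is $\delta$, and from $\delta$ you can reach $\delta/2,\delta/4,\ldots$ and $\delta^2,\delta^4,\ldots$, none of which equals $1/2$; so $\const{1/2}$ is not constructible at all, let alone in range when $\delta<1/2$. The phrase ``may be arranged by first shrinking once more'' is circular: shrinking is precisely what you are trying to implement. Your second option, taking $\delta=\lambda$, fails for a different reason: the scaled variables then satisfy $|x_i'|=\lambda|x_i|\le\lambda R$, and the promise $|x_i'|\le\delta=\lambda$ would force $R\le 1$, which is false.

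The missing trick is to carry a factor of $\delta$ through the tower and renormalise after each squaring. Concretely, set $\eps\mydef\delta\cdot 2^{-2^{L+5}}$ and build $\const{\delta\cdot 2^{-2^i}}$ inductively: squaring gives $\const{\delta^2\cdot 2^{-2^{i+1}}}$, and the multiplication constraint $\const{\delta\cdot 2^{-2^{i+1}}}\cdot\const{\delta}=\const{\delta^2\cdot 2^{-2^{i+1}}}$ recovers the next term (this ``divides by $\delta$'' while staying inside the allowed constraint types). Every intermediate value is then nonnegative and at most $\delta$, and the base case $\const{\delta/2}$ comes from $\const{\delta/2}+\const{\delta/2}=\const{\delta}$. (As a side remark, your count $k=O(\log L+\log n)$ is off: one needs $2^k\ge 2^{L+5}$, so $k=\Theta(L)$ squarings, which is still $O(|\Phi|)$.)
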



\begin{proof}
Let $\Phi$ be an instance of $\etrami$ with $n$ variables $x_1,\ldots,x_n$.
We construct an instance $\Psi$ of $\etrsmall$. 

We set $\eps \mydef \delta\cdot 2^{-2^{L+5}}$, where $L\mydef|\Phi|$. 
In $\Psi$, we first define a constant variable $\const{\eps}$. 
This is obtained by exponentiation by squaring, using $O(L)$ 
new constant variables and constraints.
We first define $\const{\delta}$, $\const{0}$, and $\const{\delta\cdot 2^{-2^0}}$ by the equations
\begin{align*}
\const{\delta} & =\delta \\
\const{0}+\const{\delta} & = \const{\delta} \\
\const{\delta\cdot 2^{-2^0}}+\const{\delta\cdot 2^{-2^0}} & =\const{\delta}.
\end{align*}
We then use the following equations for all $i\in\{0,\ldots,L+4\}$,
\begin{align*}
\const{\delta\cdot 2^{-2^i}}\cdot\const{\delta\cdot 2^{-2^i}} & =\const{\delta^2\cdot 2^{-2^{i+1}}} \\
\const{\delta\cdot 2^{-2^{i+1}}}\cdot \const{\delta} & = \const{\delta^2\cdot 2^{-2^{i+1}}}.
\end{align*}
Finally, we define $\const{\eps}$ by the constraint $\const{\eps}+\const{0}=\const{\delta\cdot 2^{-2^{L+5}}}$.

In $\Psi$, we use the variables $\const{\eps x_1},\ldots,\const{\eps x_n}$ instead of
$x_1,\ldots,x_n$.
An equation of $\Phi$ of the form $x=1$ is transformed to the equation
$\const{\eps x} + \const{0}=\const{\eps}$ in $\Psi$.
An equation of $\Phi$ of the form $x+y=z$ is transformed to the equation
$\const{\eps x} + \const{\eps y} = \const{\eps z}$ of $\Psi$.
For an equation of $\Phi$ of the form $x\cdot y=z$, we also introduce a variable $\const{\eps^2 z}$ of $\Psi$ and the equations
\begin{align*}
\const{\eps x} \cdot \const{\eps y} & = \const{\eps^2 z} \\
\const{\eps}\cdot \const{\eps z} & = \const{\eps^2 z}.
\end{align*}
At last, constraints of the form 
$x \geq 0$ become $\const{\eps x} \geq 0$.

We now describe a function $f : V(\Phi) \rightarrow V(\Psi)$ in order to show that $\Psi$ has the properties stated in the lemma.
Let $\mathbf x\mydef (x_1,\ldots,x_n)\in V(\Phi)$.
In order to define $f$, it suffices to specify the values of the variables of $\Psi$ depending on $\mathbf x$.
For all the constant variables $\const{c}$, we define $\const{c}\mydef c$.
Note that these are all \emph{rational} constants.
For all $i\in\{1,\ldots,n\}$, we now define $\const{\eps x_i} \mydef \eps x_i$ and (when $\const{\eps^2 x_i}$ appears in $\Psi$) $\const{\eps^2 x_i} = \eps^2 x_i$.
Since $\mathbf x$ is a solution to $\Phi$, it follows from the constraints of $\Psi$ that these assignments are a solution to $\Psi$.

We need to verify that $\Psi$ defines an \etrsmall problem, i.e., that $\Psi$ satisfies the promise that $V(\Psi)\subset [-\delta,\delta]^m$, where $m$ is the number of variables of $\Psi$.
To this end, consider an assigment of the variables of $\Psi$ that satisfies all the constraints.
Note first that the constant variables are non-negative and at most $\delta$.
For the other variables, we consider the inverse $f^{-1}$, which is given by the assignment $x_i\mydef \const{\eps x_i}/\eps$ for all $i\in\{1,\ldots,n\}$.
It follows that this yields a solution to~$\Phi$.
Since $V(\Phi)$ is compact, it follows from Corollary~\ref{cor:BallContain} that $|\const{\eps x_i}/\eps|\leq 2^{2^{L+5}}$.
Hence $|\const{\eps x_i}|\leq \eps\cdot 2^{2^{L+5}}=\delta\cdot 2^{-2^{L+5}}\cdot 2^{2^{L+5}}=\delta$.
Similarly, when $\const{\eps^2 x_i}$ is a variable of $\Psi$, we get $|\const{\eps^2 x_i}|\leq \eps\cdot\delta<\delta$.

By the definitions of $f$ and $f^{-1}$, we have now established that $V(\Phi)\simeq V(\Psi)$ and $V(\Phi)\leqlin V(\Psi)$.
The length of $\Psi$ is $O(L)$ longer than the length of $\Phi$, and $\Psi$ can thus be computed in $O(|\Phi|)$ time.
\end{proof}

\section{Reduction to \etrshift}
\label{sec:SHIFT}
Let $\delta > 0$ be given.
The definition of \etrshift depends on $\delta$, but we will suppress $\delta$ in the notation to keep it simpler.
\begin{definition}
\label{def:etrshift}
An \emph{\etrshift formula}~$\Phi\mydef \Phi(x_1,\ldots,x_n)$ is a conjunction
\[
\left(\bigwedge_{i=1}^n 1/2\leq x_i\leq 2\right) \land \left(\bigwedge_{i=1}^m C_i\right),
\]
where $m\geq 0$ and each $C_i$ is of one of the forms
\begin{align*}
 x+y=z,\quad x\cdot y=z 
\end{align*}
for $x,y,z \in \{x_1, \ldots, x_n\}$. 

An instance $\mathcal I\mydef \left[\Phi,(I(x_1),\ldots,I(x_n))\right]$ of the \emph{\etrshift problem} is an \etrshift formula $\Phi$ and, for each variable $x\in\{x_1,\ldots,x_n\}$, an interval $I(x)\subseteq [1/2,2]$  such that $|I(x)|\leq 2 \delta$.
For every \emph{multiplication constraint} $x\cdot y=z$, we have $I(x)\subset [1-\delta,1+\delta]$ and $I(y)\subset [1-\delta,1+\delta]$.
Define $\delta(\mathcal I)\mydef\delta$ and $\Phi(\mathcal I)\mydef\Phi$.

We are promised that $V(\Phi)\subset I(x_1)\times\cdots\times I(x_n)$.
The goal is to decide whether $V(\Phi)\neq\emptyset$.
\end{definition}


We will now present a reduction from the problem \etrsmall to \etrshift.
The following technical lemma is a handy tool to show that all variables $x$ of the constructed \etrshift problem are in the ranges $I(x)$ we are going to specify.

\begin{restatable}{lemma}{CoreValue}
\label{lem:Core-Value}
 Let $g(x,y) \mydef  \frac{p(x,y)}{q(x,y)}$ be a rational function such that
\begin{align*}
p(x,y) & \mydef a_1 \ x^2 + a_2 \ xy + a_3 \ y^2  + a_4 \ x + a_5 \ y + a_6,\quad\text{and} \\
q(x,y) & \mydef b_1 \ x^2 + b_2 \ xy + b_3 \ y^2  + b_4 \ x + b_5 \ y + b_6,
\end{align*}
where $b_6>0$.
Let $\alpha\mydef |b_1|+\ldots+|b_5|$ and $\beta\mydef |b_1|+\ldots+|b_5|$.

Then for all $x,y\in [-\delta,\delta]$, where $\delta\in[0,1]$, we have
\begin{align}\label{lem:Core-Value:bound}
g(x,y)\in\left[\frac{-\alpha\delta+a_6}{\beta \delta+b_6},\frac{\alpha\delta+a_6}{-\beta\delta+b_6}\right].
\end{align}

In particular,
\begin{itemize}
\item[(a)]
if $q(x,y)=b_6=1$ and $a_1,\ldots,a_5\in [0,c]$ for some $c\geq 0$, then
\[
g(x,y)\in[a_6-5c\delta,a_6+5c\delta],
\]
and

\item[(b)]
if $a_1,\ldots,a_5,b_1,\ldots,b_5\in[-c, c]$ for some $c\geq 0$ and $\delta\leq\frac{\eps b_6^2}{5c(a_6+(1+\eps)b_6)}$ for a given $\eps>0$, then
\[
g(x,y)\in [a_6/b_6-\eps,a_6/b_6+\eps].
\]
\end{itemize}
\end{restatable}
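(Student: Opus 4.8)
The plan is to first establish the master bound~\eqref{lem:Core-Value:bound} by a direct monotonicity/interval-arithmetic argument, and then derive (a) and (b) as specializations. For the master bound, I would bound numerator and denominator separately over the box $[-\delta,\delta]^2$. For the numerator, since $|x|,|y|\le\delta\le 1$ we have $|x^2|,|xy|,|y^2|\le\delta$ and $|x|,|y|\le\delta$, hence $a_1x^2+\dots+a_5y\in[-\alpha\delta,\alpha\delta]$ where $\alpha=|a_1|+\dots+|a_5|$ — note the statement currently defines both $\alpha$ and $\beta$ as $|b_1|+\dots+|b_5|$, which is clearly a typo; the intended reading is $\alpha=|a_1|+\dots+|a_5|$, $\beta=|b_1|+\dots+|b_5|$, and I would prove it in that form. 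So $p(x,y)\in[a_6-\alpha\delta,\ a_6+\alpha\delta]$. Likewise $q(x,y)\in[b_6-\beta\delta,\ b_6+\beta\delta]$. The one genuine subtlety is that dividing interval bounds is only valid when the denominator interval does not straddle $0$; so I would either add the hypothesis $\beta\delta<b_6$ explicitly or observe that otherwise the claimed interval in~\eqref{lem:Core-Value:bound} has nonpositive lower denominator and the statement is vacuous/degenerate. Assuming $b_6-\beta\delta>0$, both $p$ and $q$ ranges are handled, and since $g=p/q$ with $q>0$, the extreme values of $g$ are attained with $p$ at an endpoint and $q$ at an endpoint, giving $g(x,y)\le \frac{a_6+\alpha\delta}{b_6-\beta\delta}$ and $g(x,y)\ge\frac{a_6-\alpha\delta}{b_6+\beta\delta}$, which is exactly~\eqref{lem:Core-Value:bound}. (A clean way to package this: write $g = \frac{p}{q}$, and use that $t\mapsto \frac{t+a_6}{ \,\cdot\,}$ etc.\ are monotone in each argument on the relevant sign-constant region.)

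For part (a), substitute $q\equiv b_6=1$ and $a_1,\dots,a_5\in[0,c]$, so $\alpha=|a_1|+\dots+|a_5|\le 5c$ and $\beta=0$. Then~\eqref{lem:Core-Value:bound} collapses to $g(x,y)\in[a_6-\alpha\delta,\ a_6+\alpha\delta]\subseteq[a_6-5c\delta,\ a_6+5c\delta]$, as claimed. This is immediate once the master bound is in place.

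For part (b), we have $\alpha,\beta\le 5c$. I would show the hypothesis $\delta\le \frac{\eps b_6^2}{5c\,(a_6+(1+\eps)b_6)}$ forces the interval in~\eqref{lem:Core-Value:bound} to lie inside $[a_6/b_6-\eps,\ a_6/b_6+\eps]$. Concretely, bound the upper endpoint: $\frac{a_6+\alpha\delta}{b_6-\beta\delta}-\frac{a_6}{b_6} = \frac{b_6(a_6+\alpha\delta)-a_6(b_6-\beta\delta)}{b_6(b_6-\beta\delta)} = \frac{\delta(\alpha b_6+a_6\beta)}{b_6(b_6-\beta\delta)}\le \frac{\delta\cdot 5c\,(a_6+b_6)}{b_6(b_6-\beta\delta)}$. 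One then checks that the stated bound on $\delta$ (in particular $\beta\delta\le 5c\delta\le \eps b_6^2/(a_6+(1+\eps)b_6)$, and since $a_6+(1+\eps)b_6\ge \eps b_6$ this gives $\beta\delta\le b_6$, in fact $b_6-\beta\delta\ge b_6/(1+\eps)$ after a short estimate) makes this $\le\eps$; the lower endpoint is symmetric and in fact easier since $b_6+\beta\delta\ge b_6$. The main obstacle here is purely bookkeeping: choosing the right slack so that the factor $b_6-\beta\delta$ in the denominator can be bounded below by something like $b_6/(1+\eps)$, which is what the mysterious $(1+\eps)b_6$ term in the hypothesis is engineered to provide. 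I would do that estimate carefully but it is elementary; there is no conceptual difficulty, only the risk of an off-by-a-constant slip, so I would double-check the algebra against the intended downstream use (all the $a_i,b_i$ coming from small rational functions with $b_6$ of order $1$).
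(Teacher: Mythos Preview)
Your proposal is correct and follows essentially the same approach as the paper: bound $p$ and $q$ termwise over $[-\delta,\delta]^2$, use $\delta^2\le\delta$ to collapse to $[a_6\pm\alpha\delta]$ and $[b_6\pm\beta\delta]$, then divide; (a) is immediate and for (b) the paper likewise just says ``one can then check.'' Your identification of the $\alpha/\beta$ typo is right, and for (b) the algebra is in fact cleaner than you fear: the inequality $\frac{a_6+5c\delta}{b_6-5c\delta}\le \frac{a_6}{b_6}+\eps$ rearranges \emph{exactly} to $5c\delta\bigl(a_6+(1+\eps)b_6\bigr)\le \eps b_6^2$, with no slack needed.
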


\begin{proof}
We bound each term in each polynomial from below and above and get
\begin{align*}
p(x,y) \in [ & -(|a_1|+|a_2|+|a_3|)\delta^2-(|a_4|+|a_5|)\delta+a_6,\\
 & + (|a_1|+|a_2|+|a_3|)\delta^2+(|a_4|+|a_5|)\delta+a_6], \quad\text{and} \\
q(x,y) \in [ & -(|b_1|+|b_2|+|b_3|)\delta^2-(|b_4|+|b_5|)\delta+b_6, \\
 & + (|b_1|+|b_2|+|b_3|)\delta^2+(|b_4|+|b_5|)\delta+b_6].
\end{align*}

The bounds~\eqref{lem:Core-Value:bound} now follows as $\delta\in[0,1]$ so that $\delta^2\leq\delta$.

For part (a), note that $\beta=0$ and $\alpha\in [0,5c]$.
For part (b), we get that
\[
g(x,y)\in\left[\frac{-5c\delta+a_6}{5c\delta+b_6},\frac{5c\delta+a_6}{-5c\delta+b_6}\right].
\]
One can then check that if $\delta\leq\frac{\eps b_6^2}{5c(a_6+(1+\eps)b_6)}$, that range is contained in $[a_6/b_6-\eps,a_6/b_6+\eps]$.
\end{proof}

\begin{lemmaAlph}[\etrshift Reduction]
\label{lem:Reduction-SMALL-SHIFT}
Let $\delta_2\in(0,1/4)$ be given, and let $\delta_1\leq \delta_2/5$ such that $\delta_1=2^{-l}$ for $l\in\N$.
Consider an instance of the \etrsmall problem, defined by a formula $\Phi_1$ such that $\delta(\Phi_1)=\delta_1$.
We can in $O(|\Phi_1| +l)$ time compute an instance $\mathcal I_2$ of the \etrshift problem with $\delta(\mathcal I_2)=\delta_2$ and formula $\Phi_2\mydef\Phi(\mathcal I_2)$ such that
 $V(\Phi_1) \simeq V(\Phi_2)$ and
 $V(\Phi_1) \leqlin V(\Phi_2)$.
\end{lemmaAlph}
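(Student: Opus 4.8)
The plan is to reduce an \etrsmall formula $\Phi_1$ (whose solutions live in $[-\delta_1,\delta_1]^n$) to an \etrshift instance $\mathcal I_2$ by an affine change of variables that sends the box $[-\delta_1,\delta_1]$ into a tiny sub-interval of $[1/2,2]$. The natural choice is to replace each variable $x$ of $\Phi_1$ by a new variable holding the value $1+x$ (or a scaled shift $1+x/c$ if we need to shrink the range further to meet the $|I(x)|\le 2\delta_2$ promise, but since $\delta_1\le\delta_2/5$ the plain shift $1+x$ already lands in $[1-\delta_1,1+\delta_1]\subset[1/2,2]$ with interval length $2\delta_1\le 2\delta_2$). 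So the replacement variable for $x$ is $\const{1+x}$ with declared interval $I(\const{1+x})=[1-\delta_1,1+\delta_1]$; and we will need the constant variable $\const{1}$ with $I(\const{1})$ a trivial interval $[1,1]$ (length $0\le 2\delta_2$), plus possibly $\const{2}$, $\const{3}$, $\const{1/2}$ for arithmetic. The key issue is that \etrshift has no ``$x=\delta$''-type constant gadget — constants must be pinned down implicitly from $1/2\le x_i\le 2$ together with multiplication/addition constraints and the range promise. I would construct $\const{1}$ via, e.g., $\const{1}\cdot\const{1}=\const{1}$ restricted to $[1-\delta_2,1+\delta_2]$, or more robustly via $\const{1}+\const{1}=\const{2}$, $\const{2}+\const{1}=\const{3}$, $\const{1}\cdot\const{3}=\const{3}$ with suitable ranges, so that the only consistent value is $1$; and then $\const{\delta_1}$ is obtained by $l$ halvings, $\const{2^{-j}}\cdot\const{2}=\const{2^{-(j+1)}}$-style (or $\const{2^{-(j+1)}}+\const{2^{-(j+1)}}=\const{2^{-j}}$), which costs $O(l)$ variables — this is where the $+l$ in the running time comes from. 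Because $\delta_1=2^{-l}$, $\delta_1$ is dyadic and hence exactly expressible by $l$ successive halvings.

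**Next I would** translate each constraint of $\Phi_1$. The constraint $x=\delta_1$ becomes: introduce $\const{1+x}$ and force $\const{1+x}=\const{1}+\const{\delta_1}$ via one addition constraint (all three variables in appropriate narrow intervals around their intended values). An addition constraint $x+y=z$ of $\Phi_1$ becomes, in terms of shifted variables, $(1+x)+(1+y)=(1+z)+1$, i.e. $\const{1+x}+\const{1+y}=\const{(1+z)+1}$ together with $\const{(1+z)+1}=\const{1+z}+\const{1}$; here $\const{(1+z)+1}$ holds a value near $2$, which sits in $[1/2,2]$ (I may need to rescale the shift to $1+x/2$ so the sum stays $\le 2$ — with $\delta_1\le 1/20$ and values in $[1-\delta_1,1+\delta_1]$ the sum $2+z\in[2-\delta_1,2+\delta_1]$ slightly exceeds $2$, so a scaled shift $1+\delta_2 x$ or a different offset such as $3/2+x$ may be cleaner; I would pick the offset to keep all intermediate sums safely inside $(1/2,2)$, say replace $x$ by $\const{1+x/4}$ and correspondingly $\delta_1$ by $\const{1+\delta_1/4}$). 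A multiplication constraint $x\cdot y=z$ is the delicate one: $(1+x)(1+y)=1+x+y+xy$, so from the product of the shifted variables one recovers $xy$ only after subtracting the linear part, which we already have as variables; concretely introduce $\const{(1+x)(1+y)}=\const{1+x}\cdot\const{1+y}$ and then relate it by additions to $\const{1+x}$, $\const{1+y}$, and $\const{1+z}$. Crucially, the \etrshift promise requires $I(x),I(y)\subset[1-\delta_2,1+\delta_2]$ for the two multiplied variables — which is exactly why $\Phi_1$ is an \etrsmall instance with the same variables bounded near $0$: the shifted multiplicands automatically lie in $[1-\delta_1,1+\delta_1]\subset[1-\delta_2,1+\delta_2]$. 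For the auxiliary product and sum variables I would use Lemma~\ref{lem:Core-Value} to certify that their values lie in the declared intervals $I(\cdot)$ of length $\le 2\delta_2$: each such variable is a low-degree polynomial (hence a trivial rational function with $q\equiv 1$, $b_6=1$) in two $[-\delta_1,\delta_1]$-variables with bounded coefficients, so part~(a) of the lemma gives a window of radius $O(\delta_1)=O(\delta_2)$ around the intended (rational) centre. I then declare $I(v)$ to be exactly that window, so the range promise of \etrshift holds by construction, while the promise $V(\Phi_2)\subseteq\prod I(x_i)$ holds because any solution of $\Phi_2$ restricts to a solution of $\Phi_1$ (via the inverse map) and the \etrsmall promise plus Lemma~\ref{lem:Core-Value} pin every variable into its window.

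**Then** the equivalence statements: define $f:V(\Phi_1)\to V(\Phi_2)$ by sending $(x_1,\dots,x_n)$ to the tuple consisting of all constant variables (set to their fixed rational values, e.g. $\const{1}=1$, $\const{\delta_1}=\delta_1$, the $O(l)$ halving-constants to $2^{-j}$) together with $\const{1+x_i/4}=1+x_i/4$ and the auxiliary sum/product variables set to the obvious polynomials in the $x_i$. Every coordinate of $f$ is a polynomial, hence rational; $f$ is injective because the $x_i$ are recovered by $x_i=4(\const{1+x_i/4}-1)$, an affine (hence rational, continuous) map, which is also the orthogonal projection onto the coordinates of the shifted original variables composed with the affine normalisation — this simultaneously gives $V(\Phi_1)\leqlin V(\Phi_2)$ with $a=(4,\dots,4)$, $b=(-4,\dots,-4)$. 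Surjectivity onto $V(\Phi_2)$: given a solution of $\Phi_2$, the constant-variable gadgets force those variables to their intended values, the addition/multiplication gadgets then force each auxiliary variable to the required polynomial in $x_i:=4(\const{1+x_i/4}-1)$, and unwinding the transformed constraints shows $(x_1,\dots,x_n)\in V(\Phi_1)$; then $f$ of this tuple is the given solution. So $f$ is a rational homeomorphism (both $f$ and $f^{-1}=$ affine projection are continuous and rational), giving $V(\Phi_1)\simeq V(\Phi_2)$. Finally the running time: $O(|\Phi_1|)$ constraints are rewritten into $O(1)$ constraints each, plus $O(l)$ extra constraints and variables for building $\const{\delta_1}$ by repeated halving, for a total of $O(|\Phi_1|+l)$.

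**The main obstacle** I expect is the bookkeeping around keeping every intermediate value strictly inside $(1/2,2)$ — in particular sums of two near-$1$ quantities want to be near $2$ and products near $1$ — and choosing a single uniform affine offset/scale (I suggested $x\mapsto 1+x/4$) that makes \emph{all} of $x$, $x+y$, $x\cdot y$ and the various constants simultaneously land in valid narrow sub-intervals of $[1/2,2]$ of length at most $2\delta_2$, while also satisfying the extra \etrshift restriction that the two factors of every multiplication gadget lie in $[1-\delta_2,1+\delta_2]$. Getting the constants for the factors right (they must be genuinely near $1$, not merely in $[1/2,2]$) forces the offset to be exactly $1$ on the multiplicand variables, which then constrains how sums are encoded; reconciling these is the fiddly part, and Lemma~\ref{lem:Core-Value}(a) is the tool that makes the interval arithmetic routine once the offset is fixed.
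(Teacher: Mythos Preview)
Your overall strategy---shift each $x$ to $\const{1+x}$, build the needed rational constants, translate addition and multiplication using the identity $(1+x)(1+y)=1+x+y+xy$, and verify the ranges via Lemma~\ref{lem:Core-Value}(a)---is exactly the paper's approach. Two concrete gaps, however, would make your construction fail as written.

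First, your construction of the constant $\const{\delta_1}$ by repeated halving produces intermediate variables $\const{2^{-j}}$, which for $j\ge 2$ lie strictly below $1/2$ and hence outside the mandatory range $[1/2,2]$ of every \etrshift variable. (Your alternative suggestion $\const{3}$ has the same defect on the other side.) The paper sidesteps this by never building $\const{\delta_1}$ at all: it instead constructs $\const{\Delta}$ with $\Delta\mydef 1-\delta_1$, iterating
\[
\const{1-2^{-i}}+\const{1}=\const{2-2^{-i}},\qquad
\const{1-2^{-(i+1)}}+\const{1-2^{-(i+1)}}=\const{2-2^{-i}},
\]
so that every auxiliary constant stays in $[1/2,2)$. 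The constraint $x=\delta_1$ is then encoded as $\const{x+\Delta}=\const{1}$ rather than $\const{1+x}=\const{1}+\const{\delta_1}$. This is the key trick you are missing, and it is precisely the ``bookkeeping'' obstacle you flagged.

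Second, you do not handle the $x\ge 0$ constraints of \etrsmall. The paper encodes these by introducing $\const{x+1/2}$ via $\const{x+1/2}+\const{1/2}=\const{x+1}$; the built-in lower bound $\const{x+1/2}\ge 1/2$ then enforces $x\ge 0$, and the promise interval is $I(\const{x+1/2})=[1/2,1/2+\delta_2]$.

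Finally, for addition the paper resolves the overflow you noticed (that $(1+x)+(1+y)\approx 2$) not by rescaling but by using a second offset: it builds $\const{x+3/4}$ from $\const{x+1}$ and replaces $x+y=z$ by $\const{x+3/4}+\const{y+3/4}=\const{z+3/2}$, with all three values comfortably inside $[1/2,2]$. Your scaling idea $x\mapsto 1+x/4$ would also work but complicates the multiplication step, since the factors must sit in $[1-\delta_2,1+\delta_2]$; keeping the multiplicands at offset~$1$ and only shifting for addition is cleaner.
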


\begin{proof}
In the following, we specify the variables and constraints we add to $\Phi_2$.
Define $\Delta\mydef 1-\delta_1$.
As a first step, we construct constant variables $\const{c}$ for each of $c\in\{1/2, 3/4, 1, 3/2\}$, as follows.
We first use the constraint $\const{1}\cdot\const{1}=\const{1}$.
Note that the solutions to this are $\const{1}\in\{0,1\}$, but since we are restricted to $[1/2,2]$, we conclude that $\const{1}=1$.
We observe that $\const{1}$ is in the promised range $[1-\delta_2,1+\delta_2]$ of variables involved in multiplication constraints.

We can then construct the other constants as follows.
\begin{align*}
\const{1/2} + \const{1/2} & = \const{1} \\
\const{1}+\const{1/2} & = \const{3/2} \\
\const{3/4} + \const{3/4} & = \const{3/2}
\end{align*}

We now show how to construct a constant variable $\const{\Delta}$.
To this end, we construct constant variables $\const{1-2^{-i}}$ for $i\in\{1,\ldots,l\}$, so that $\const{\Delta}$ is a synonym for $\const{1-2^{-l}}$.
For the base case $i=1$, note that this is just the already constructed $\const{1/2}$.
Suppose inductively that we have constructed the constant variable $\const{1- 2^{-i}}$.
In order to construct $\const{1-2^{-(i+1)}}$, we proceed as follows.
\begin{align*}
\const{1- 2^{-i}} + \const{1} & = \const{2- 2^{-i}} \\
\const{1-2^{-(i+1)}}+\const{1-2^{-(i+1)}} & = \const{2- 2^{-i}}
\end{align*}
Thus we can generate a variable with the value $\const{\Delta}$ in $O(l)$ steps.

For each of the constant variables $\const{c}$ thus created, we define $I(\const{c})\mydef [c-\delta_2,c+\delta_2]\cap[1/2,2]$.
Note that it follows from the constraints that in any solution to $\Phi_2$, we must have $\const{c}=c$.

For each each variable $x \in [-\delta_1,\delta_1]$ of $\Phi_1$, we make a corresponding variable $\const{x+1}$ of $\Phi_2$.
As we shall see, for every solution $\mathbf x\mydef (x_1,\ldots,x_n)$ of $\Phi_1$, there will be a corresponding solution to $\Phi_2$ with $\const{x_i+1}=x_i+1$, and vice versa.

For each variable $x$ of $\Phi_1$, we construct the variables $\const{x+3/2}$, $\const{x+3/4}$, and $\const{x+\Delta}$ as follows.
\begin{align*}
\const{x+1} + \const{1/2} & = \const{x+3/2}\\
\const{x+3/4}+\const{3/4} & = \const{x+3/2}\\
\const{x+3/4} + \const{\Delta} & = \const{x+3/4+\Delta} \\
\const{x+\Delta}+\const{3/4} & = \const{x+3/4+\Delta}.
\end{align*}
For each of these of the form $\const{x+b}$, $b\in\{3/4,\Delta,3/2\}$, it holds that if $\const{x+1}=x+1$, then $\const{x+b}=x+b$.

We now go through the constraints of $\Phi_1$ and create equivalent constraints in $\Phi_2$.
For each equation $x=\delta_1$ of $\Phi_1$, we add the equation $\const{x+\Delta}=1$ to $\Phi_2$.
The equation implies that if $\const{x+1}=x+1$, then $x=\delta_1$.

For each equation $x+y=z$ of $\Phi_1$, we add
\begin{align}\label{etrshift:addition}
\const{x+3/4} + \const{y+3/4} = \const{z+3/2}
\end{align}
to $\Phi_2$.
This equation implies that if $\const{x+1}=x+1$ and $\const{y+1}=y+1$, then $\const{z+1}=x+y+1$.

For each equation $x \cdot y=z$ of $\Phi_1$, we have the following set of equations in $\Phi_2$.
\begin{align*}
\const{x+1} \cdot \const{y+1} &= \const{xy + x+y+1} \\
\const{xy + x + y + 1}+\const{1/2} & =\const{xy + x+y+3/2} \\
\const{xy+x+3/4} + \const{y+3/4}  &= \const{xy + x + y + 3/2}  \\
\const{xy+x+3/4}+\const{3/4} & = \const{xy+x+3/2} \\
\const{z+3/4} + \const{x+3/4}  &= \const{xy+x+3/2}
\end{align*}
These equations imply that if $\const{x+1}=x+1$ and $\const{y+1}=y+1$, then $\const{z+1}=xy+1$.

At last, for each constraint $x\geq 0$ of $\Phi_1$, we introduce the variable $\const{x+1/2}$ of $\Phi_2$ and the equation $\const{x+1/2} + \const{1/2} = \const{x+1}$.
The constraint $\const{x+1/2} \geq 1/2$, which holds for all variables of $\Phi_2$ by definition of \etrshift, then corresponds to $x \geq 0$.

Note that each of the introduced variables has the form $\const{p(x,y)}$, where $p(x,y)$ is a polynomial of degree at most $2$ and with constant term $c\mydef p(0,0)\in \{1/2,3/4,1,3/2\}$.
We now define $I(\const{p(x,y)})\mydef [c-\delta_2,c+\delta_2]\cap[1/2,2]$.

The construction of $\Phi_2$ is now finished, and we need to check that it has the claimed properties.
Let the variables of $\Phi_1$ be $x_1,\ldots,x_n$ and the variables of $\Phi_2$ be
$$\const{x_1+1},\ldots,\const{x_n+1},\const{y_1},\ldots,\const{y_m}.$$
For each variable $\const{y_i}$, $i\in\{1,\ldots,m\}$, the expression $y_i$ is a polynomial of degree at most two in two variables among $x_1,\ldots,x_n$ (this includes the case that $y_i$ is a constant).
Consider any solution $\mathbf x\mydef (x_1,\ldots,x_n)\in [-\delta_1,\delta_1]^n$ to $\Phi_1$.
We get a corresponding solution $f(\mathbf x)$ to $\Phi_2$ as follows.
We set $\const{x_i+1}\mydef x_i+1$ for every $i\in\{1,\ldots,n\}$.
For every $i\in\{1,\ldots,m\}$, $y_i$ is a (possibly constant) polynomial in two variables among $x_1,\ldots,x_n$, and we assign $\const{y_i}$ the value we get when evaluating this polynomial.

In order to show that this yields a solution to $\Phi_2$, we consider the constraint~\eqref{etrshift:addition} introduced to $\Phi_2$ due to an addition $x+y=z$ of $\Phi_1$.
The other constraints can be verified in a similar way.
Due to the construction of $\const{x+3/4}$, it follows from $\const{x+1}\mydef x+1$ that $\const{x+3/4}=x+3/4$, and similarly that $\const{y+3/4}=y+3/4$ and $\const{z+3/2}=z+3/2$.
Hence we have
\begin{align*}
\const{x+3/4}+\const{y+3/4}=x+3/4+y+3/4=z+3/2=\const{z+3/2},
\end{align*}
so indeed the constraint is satisfied.

Note that the inverse of $f$ is
\[
f^{-1} : (\const{x_1+1},\ldots,\const{x_n+1},\const{y_1},\ldots,\const{y_m})\mapsto (\const{x_1+1}-1,\ldots,\const{x_n+1}-1).
\]
We now show that $f^{-1}$ is a map from $V(\Phi_2)$ to $V(\Phi_1)$, i.e., that given any solution to $\Phi_2$, $f^{-1}$ yields a solution to $\Phi_1$.
Consider a constraint of $\Phi_1$ of the form $x+y=z$.
We then have
\begin{align*}
x+y & =\const{x+1}-1+\const{y+1}-1=\const{x+3/4}+1/4-1+\const{y+3/4}+1/4-1 \\
& =\const{z+3/2}-3/2=z.
\end{align*}
In a similar way, the other constraints of $\Phi_1$ can be shown to hold due to the constraints of $\Phi_2$.

It follows that $f$ is a rational homeomorphism so $V(\Phi_1)\simeq V(\Phi_2)$, and since $f^{-1}$ merely subtracts $1$ from some variables, we also have $V(\Phi_1)\leqlin V(\Phi_2)$.

At last, we need to verify that $\Phi_2$ satisfies the promise that in every solution, each variable $\const{p(x,y)}$ is in the interval $I(\const{p(x,y)})$.
Here, $p(x,y)$ is a polynomial of degree at most $2$ and with constant term $c\mydef p(0,0)\in \{1/2,3/4,1,3/2\}$.
By the map $f^{-1}$, we get a solution to $\Phi_1$ by the assignments $x\mydef \const{x+1}-1$ and $y\mydef \const{y+1}-1$ (and similarly for the remaining variables of $\Phi_1$).
It then follows from the constraints of $\Phi_2$ that $\const{p(x,y)}=p(x,y)$.
By the promise of $\Phi_1$, we get that $x,y\in[-\delta_1,\delta_1]$.
The coefficients of the non-constant terms of $p(x,y)$ are all either $0$ or $1$.
We therefore get by Lemma~\ref{lem:Core-Value} that since $x,y\in[-\delta_1,\delta_1]$, then $p(x,y)\in[c-5\delta_1,c+5\delta_1]\subset[c-\delta_2,c+\delta_2]$.

Recall that $I(\const{p(x,y)})\mydef [c-\delta_2,c+\delta_2]\cap[1/2,2]$ and that $\delta_2< 1/4$.
With the exception of the case $c=1/2$, we therefore have that $I(\const{p(x,y)})=[c-\delta_2,c+\delta_2]$, so it follows that $\const{p(x,y)}\in I(\const{p(x,y)})$.
Note that the case $c=1/2$ only occurs when $p(x,y)=x+1/2$ and $I(\const{x+1/2})= [1/2,1/2+\delta_2]$.
In this case, there is a constraint $x\geq 0$ in $\Phi_1$.
Hence $x\in [0,\delta_1]$, so that $\const{x+1/2}=x+1/2\in[1/2,1/2+\delta_1]\subset [1/2,1/2+\delta_2]=I(\const{x+1/2})$.

In order to construct $\const{\Delta}$ in $\Phi_2$, we introduce $O(l)$ variables and constraints.
For each of the $O(|\Phi_1|)$ variables and constraints of $\Phi_1$, we make a constant number of variables and constraints in $\Phi_2$.
It thus follows that the running time is $O(|\Phi_1|+l)$.
This completes the proof.
\end{proof}

\section{Reduction to \etrsquare}
\label{sec:SQUARE}
In this and the following section, we show that the problem \etrshift remains essentially equally hard even when we only allow more specialized types of multiplications in our formulas.
In this section, we require every multiplication to be a \emph{squaring} of the form $x^2=y$, and in the following section, we only allow \emph{inversion} of the form $x\cdot y=1$.
The result that these two restricted types of constraints preserve the full expressibility of \etrshift is related to the result of Aho \etal~\cite[Section~8.2]{e_k1974design} that squaring and taking reciprocals of integers require work proportional to that of multiplication.

Let $\delta > 0$ be given.
The definition of \etrsquare depends on $\delta$, but we will suppress $\delta$ in the notation to keep it simpler.
\begin{definition}
\label{def:etrsquare}
An \emph{\etrsquare formula}~$\Phi\mydef \Phi(x_1,\ldots,x_n)$ is a conjunction
\[
\left(\bigwedge_{i=1}^n 1/2\leq x_i\leq 2\right) \land \left(\bigwedge_{i=1}^m C_i\right),
\]
where $m\geq 0$ and each $C_i$ is of one of the forms
\begin{align*}
 x+y=z,\quad x^2=y 
\end{align*}
for $x,y,z \in \{x_1, \ldots, x_n\}$. 

An instance $\mathcal I=\left[\Phi,(I(x_1),\ldots,I(x_n))\right]$ of the \emph{\etrsquare problem} is an \etrsquare formula $\Phi$ and,  for each variable $x\in\{x_1,\ldots,x_n\}$, an interval $I(x)\subseteq [1/2,2]$ such that $|I(x)|\leq 2 \delta$.
For every \emph{squaring constraint} $x^2=y$, we have $I(x)\subset [1-\delta,1+\delta]$.
Define $\delta(\mathcal I)\mydef\delta$ and $\Phi(\mathcal I)\mydef\Phi$.

We are promised that $V(\Phi)\subset I(x_1)\times\cdots\times I(x_n)$.
The goal is to decide whether $V(\Phi)\neq\emptyset$.
\end{definition}
Below, we present a reduction from the problem \etrshift 
to \etrsquare.  
\begin{lemmaAlph}[\etrsquare Reduction]
\label{lem:Reduction-SHIFT-SQUARE}
Let $\delta_2\in(0,1/4)$ be given, and let $\delta_1\mydef \delta_2/10$.
Consider an instance $\mathcal I_1$ of the \etrshift problem such that $\delta(\mathcal I_1)=\delta_1$, and let $\Phi_1\mydef\Phi(\mathcal I_1)$.
We can in $O(|\Phi_1|)$ time compute an instance $\mathcal I_2$ of the \etrsquare problem with $\delta(\mathcal I_2)=\delta_2$ and formula $\Phi_2\mydef\Phi(\mathcal I_2)$ such that
 $V(\Phi_1) \simeq V(\Phi_2)$ and
 $V(\Phi_1) \leqlin V(\Phi_2)$.

\end{lemmaAlph}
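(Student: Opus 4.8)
The plan is to leave every addition constraint of $\Phi_1$ untouched and to replace each multiplication constraint $x\cdot y=z$ of $\Phi_1$ by a short gadget built only from additions $p+q=r$ and squarings $p^2=q$. Every variable of $\Phi_1$ stays a variable of $\Phi_2$ with $I_2(\cdot)\mydef I_1(\cdot)$, and the map $f^{-1}$ sending a solution of $\Phi_2$ to its restriction to the variables of $\Phi_1$ will be an orthogonal projection; so once the gadgets are shown to be correct and range-respecting we obtain $V(\Phi_1)\simeq V(\Phi_2)$ and $V(\Phi_1)\leqlin V(\Phi_2)$ for free, with $f$ the map that in addition evaluates all gadget variables.

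The algebraic core is a polarization identity tailored so that every squared quantity stays near $1$. Put $u\mydef\frac{x+y}{2}$ and $v\mydef 1+\frac{x-y}{2}$; then $u-v=y-1$ and $u+v=x+1$, so $u^2-v^2=(y-1)(x+1)$ and hence
$$xy=u^2-v^2+1+x-y.$$
When $x,y\in[1-\delta_1,1+\delta_1]$ — which holds for the arguments of every multiplication of $\Phi_1$ by the \etrshift promise — both $u$ and $v$ lie in $[1-\delta_1,1+\delta_1]\subset[1-\delta_2,1+\delta_2]$, so they are legal arguments of squaring constraints, while $u^2-v^2$ and $x-y$ are tiny.

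The main obstacle, and essentially the only work, is to realize $u$, $v$, the squares $u^2,v^2$, and the final linear relation $xy=u^2-v^2+1+x-y$ using only constraints $p+q=r$, $p^2=q$ in which every variable lies in $[1/2,2]$. Naively formed intermediates misbehave on both ends: quantities such as $x+y$, $x+1$ or $z+v^2$ are $\approx 2$ and may exceed $2$, whereas $\tfrac x2$ or $u^2-v^2$ are $\approx\tfrac12$ or $\approx 0$ and may drop below $\tfrac12$. We handle this exactly as in the proof of Lemma~\ref{lem:Reduction-SMALL-SHIFT}: each auxiliary value is stored shifted by a small fixed rational so that it sits safely inside $(1/2,2)$, and each addition is checked to keep both summands and their sum in $[1/2,2]$. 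Concretely: build $\const{1/2},\const{3/4},\const 1$ once (pin $\const 1=1$ from $\const{1}^2=\const{1}$ together with $\const{1}\geq 1/2$, then additions); for each variable $x$ in a multiplication introduce $\const{x/2+1/4}\approx\tfrac34$ via $\const{x/2+1/4}+\const{x/2+1/4}=\const{x+1/2}$ and $\const{x+1/2}=x+\const{1/2}$; obtain $u\approx 1$ from $u+\const{1/2}=\const{x/2+1/4}+\const{y/2+1/4}$ and $v\approx 1$ from $v+\const{y/2+1/4}=\const{1}+\const{x/2+1/4}$; add squarings $u^2=\const{s_u}$, $v^2=\const{s_v}$; form $\const{3/4+s_u-s_v}\approx\tfrac34$ from $\const{3/4+s_u-s_v}+\const{s_v}=\const{3/4}+\const{s_u}$ and $\const{1+x-y}\approx 1$ by an analogous shifted addition; and finally impose $z+\const{3/4}=\const{3/4+s_u-s_v}+\const{1+x-y}$, which after substitution reads $z=u^2-v^2+1+x-y=xy$. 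A direct computation using $\delta_1=\delta_2/10$ and $\delta_2<1/4$ shows that in every solution each auxiliary variable equals the indicated polynomial in $x,y$ and lies within $3\delta_1<\delta_2$ of a fixed rational center $c\in\{\tfrac12,\tfrac34,1,\tfrac32,\tfrac74,2,\dots\}$; we set $I_2(\cdot)\mydef[c-\delta_2,c+\delta_2]\cap[1/2,2]$, which has length at most $2\delta_2$ and, for the squared variables $u,v,\const 1$ (all with center $1$), equals $[1-\delta_2,1+\delta_2]$, as \etrsquare requires.

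It remains to check the two directions and the running time. Given a solution $\mathbf x$ of $\Phi_1$, assigning every new variable the value of its defining polynomial satisfies all constraints of $\Phi_2$ and, by the range checks above, lies inside $\prod I_2(\cdot)$; this is $f$. Conversely the defining constraints pin every new variable to exactly that polynomial of $\mathbf x$, so $f$ is a bijection onto $V(\Phi_2)$ whose inverse is the coordinate projection onto the variables of $\Phi_1$; hence $f$ is a rational homeomorphism, giving $V(\Phi_1)\simeq V(\Phi_2)$, and $V(\Phi_1)\leqlin V(\Phi_2)$ with the required orthogonal projection being $f^{-1}$. The constant variables cost $O(1)$ and each variable or constraint of $\Phi_1$ spawns $O(1)$ new variables and constraints carrying $O(1)$-size integer data, so $\mathcal I_2$ is produced in $O(|\Phi_1|)$ time.
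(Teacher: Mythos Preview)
Your proof is correct and follows essentially the same approach as the paper: both replace each multiplication by the polarization identity $a\cdot b=\bigl(\tfrac{a+b}{2}\bigr)^2-\bigl(\tfrac{a-b}{2}\bigr)^2$ and then shift the intermediates so that every squared quantity is centered at $1$ and every auxiliary variable stays inside $[1/2,2]$, using Lemma~\ref{lem:Core-Value}-type estimates to certify the ranges. Two cosmetic points: your ``within $3\delta_1$'' is not quite right for the later intermediates (some drift by up to about $6.5\,\delta_1$), but since $\delta_1=\delta_2/10$ the crucial bound $<\delta_2$ still holds; and your four-term equalities such as $u+\const{1/2}=\const{x/2+1/4}+\const{y/2+1/4}$ must of course be split via one extra intermediate variable each, which you should state explicitly.
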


\begin{proof}
As in the proof of Lemma~\ref{lem:Reduction-SMALL-SHIFT}, we first construct constant variables $\const{b}$ for each of $b\in\{1/2, 3/4, 1, 3/2\}$.
The only difference is that we now construct $\const{1}$ using the constraint $\const{1}^2=\const{1}$.
The other constants are then constructed in exactly the same way as before.

We include each variable $x$ of $\Phi_1$ in $\Phi_2$ as well, and reuse the interval $I(x)$ from $\mathcal I_1$ in $\mathcal I_2$.
We also reuse all contraints from $\Phi_1$ of the form $a+b=c$ in $\Phi_2$, but we have to do something different for the constraints $a\cdot b=c$.
The idea is very simple and was also used by Aho \etal~\cite[Section~8.2]{e_k1974design}, and can be expressed by the equations
\begin{align*}
 x &= (a+b)/2\\
 y &= (a-b)/2\\
 u &= x^2 \\ 
 v &= y^2 \\
 c &= u-v = a\cdot b.
\end{align*}
If there were no range constraints, we could just replace each multiplication $a\cdot b =c$ of $\Phi_1$ by equations as above (after rewriting the subtractions as additions, etc.).
However, in our situation all 
intermediate variables $w$ need to
be in a range $I(w)\subset [1/2,2]$ in any solution.
While this makes the description more involved, careful shifting will work for us.

Let $a\cdot b =c$ be a multiplication constraint in $\Phi_1$.
Let us rename the variables as $\const{x+1}\mydef a$, $\const{y+1}\mydef b$, and $\const{xy+x+y+1}\mydef c$, so that $a\cdot b =c$ becomes
\begin{align}
\label{etrsquare:mult}
\const{x+1}\cdot\const{y+1}=\const{xy+x+y+1}.\tag{$\dagger$}
\end{align}
Consider two numbers $x,y\in\R$ and the two conditions
\begin{align}
& \const{x+1}=x+1\quad\text{and}\quad \const{y+1}=y+1,\label{etrsquare:cond}\tag{$\star$} \\
& \const{xy+x+y+1} =xy+x+y+1.\label{etrsquare:cond2}\tag{$\star\star$}
\end{align}
We claim that~\eqref{etrsquare:mult} is equivalent to
\begin{align}
\text{\eqref{etrsquare:cond} implies \eqref{etrsquare:cond2}}.\label{etrsquare:cond3}\tag{$\ddagger$}
\end{align}
To show this claim, suppose first that~\eqref{etrsquare:mult} holds.
Define $x\mydef\const{x+1}-1$ and $y\mydef\const{y+1}-1$.
Then $\const{xy+x+y+1}=\const{x+1}\cdot\const{y+1}=(x+1)\cdot (y+1)=xy+x+y+1$, so that~\eqref{etrsquare:cond2} holds.
Hence we have~\eqref{etrsquare:cond3}.
On the other hand, suppose that~\eqref{etrsquare:cond3} holds, and define $x\mydef\const{x+1}-1$ and $y\mydef\const{y+1}-1$ so that~\eqref{etrsquare:cond} holds.
Our assumption implies that~\eqref{etrsquare:cond2} holds, and we thus have $\const{xy+x+y+1}=xy+x+y+1=(x+1)\cdot(y+1)=\const{x+1}\cdot\const{y+1}$, so that~\eqref{etrsquare:mult} holds.
Our aim is therefore to make constraints in $\Phi_2$ that ensure \eqref{etrsquare:cond3}.

For every variable $\const{q}$ of $\Phi_2$, we can construct $\const{q/2}$ using the constraint $\const{q/2} + \const{q/2} = \const{q}$.
Similarly, we can construct $\const{2q}$ by $\const{q} + \const{q} = \const{2q}$.

The construction of $\const{xy+x+y+1}$ is shown in Figure~\ref{fig:VariableSequence2}.
The diagram should be understood in the following way.
 We start with the original variables $\const{x+1}$ and $\const{y+1}$.
 Each arrow is labeled with the operation that leads to the new variable.
It is straightforward to check that the construction ensures condition~\eqref{etrsquare:cond3}.

Note that each of the constructed auxilliary variables has the form $\const{p(x,y)}$, where $p(x,y)$ is a second degree polynomial with constant term $c\mydef p(0,0)\in\{3/4,1,3/2\}$.
We define $I(\const{p(x,y)})\mydef [c-\delta_2,c+\delta_2]$.
This finishes the construction of $\mathcal I_2$.
Note that since $\delta_2\leq 1/4$ and $c\in\{3/4,1,3/2\}$, we get that $I(\const{p(x,y)})\subset[1/2,2]$, as required.

We now verify that $\mathcal I_2$ has the claimed properties.
Consider a solution $\mathbf x\in V(\Phi_1)$.
We point out an equivalent solution to $\Phi_2$.
For all the variables of $\Phi_2$ that also appear in $\Phi_1$, we use the same value as in the solution $\mathbf x$.
Each auxilliary variable has the form $\const{p(x,y)}$, where $p(x,y)$ is a second degree polynomial, and $\Phi_1$ contains the multiplication constraint~\eqref{etrsquare:mult}.
We then get from the promise of $\Phi_1$ that $\const{1+x}=1+x$ and $\const{1+y}=1+y$ for $x,y\in[-\delta_1,\delta_1]$.
From the construction shown in Figure~\ref{fig:VariableSequence2}, it follows that in order to get a solution to $\Phi_2$, we must define $\const{p(x,y)}\mydef p(x,y)$.
Recall that the constant term $c\mydef p(0,0)$ satisfies $c\in \{3/4,1,3/2\}$, and note that all the coefficients of the non-constant terms of $p(x,y)$ are in the interval $[0,2]$.
We then get from Lemma~\ref{lem:Core-Value} that $\const{p(x,y)}\in[c-10\delta_1,c+10\delta_1]=[c-\delta_2,c+\delta_2]=I(\const{p(x,y)})\subset[1/2,2]$.
The variables are thus in the range $[1/2,2]$, so we have described a solution to $\Phi_2$.

Similarly, we see that any solution to $\Phi_2$ corresponds to a solution to $\Phi_1$.
Using the promise of $\mathcal I_1$ and Lemma~\ref{lem:Core-Value} as above, we then also confirm the promise of $\mathcal I_2$ that each variable $u$ of $\Phi_2$ is in $I(u)$.

\begin{figure}[htbp]
 \centering
 \includegraphics[page = 6]{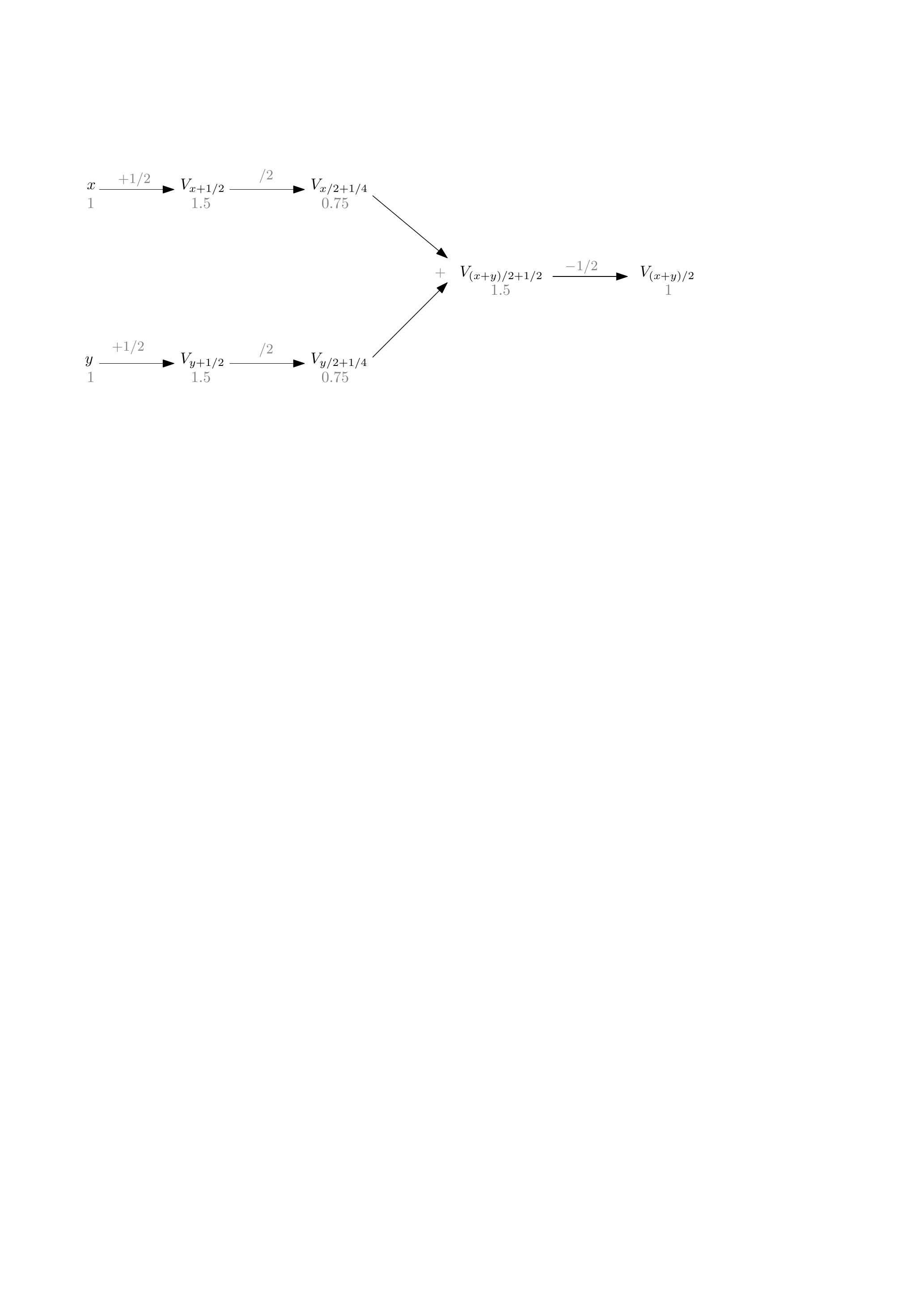}
 \caption{The construction of $\const{xy+x+y+1}$ from $\const{x+1}$ and $\const{y+1}$.
Squaring a variable is denoted by $^\wedge 2$.}
 \label{fig:VariableSequence2}
\end{figure}

The correspondance described implies that $V(\Phi_1)\simeq V(\Phi_2)$ and $V(\Phi_1)\leqlin V(\Phi_2)$.
For each constraint of $\Phi_1$, we introduce $O(1)$ variables and constraints of $\Phi_2$, so the reduction takes $O(|\Phi_1|)$ time.
\end{proof}

\section{Reduction to \etrinv}
\label{sec:INV}
Let $\delta > 0$ be given.
The definition of \etrinv depends on $\delta$, but we will suppress $\delta$ in the notation to keep it simpler.
\begin{definition}
\label{def:etrinv}
An \emph{\etrinv formula}~$\Phi=\Phi(x_1,\ldots,x_n)$ is a conjunction
\[
\left(\bigwedge_{i=1}^n 1/2\leq x_i\leq 2\right) \land \left(\bigwedge_{i=1}^m C_i\right),
\]
where $m\geq 0$ and each $C_i$ is of one of the forms
\begin{align*}
 x+y=z,\quad x\cdot y=1 
\end{align*}
for $x,y,z \in \{x_1, \ldots, x_n\}$. 

An instance $\mathcal I=\left[\Phi,(I(x_1),\ldots,I(x_n))\right]$ of the \emph{\etrinv problem} is an \etrinv formula $\Phi$ and,  for each variable $x\in\{x_1,\ldots,x_n\}$, an interval $I(x)\subseteq [1/2,2]$ such that $|I(x)|\leq 2 \delta$.
Define $\delta(\mathcal I)\mydef\delta$ and $\Phi(\mathcal I)\mydef\Phi$.

We are promised that $V(\Phi)\subset I(x_1)\times\cdots\times I(x_n)$.
The goal is to decide whether $V(\Phi)\neq\emptyset$.
\end{definition}


We will now present a reduction 
from the problem \etrsquare to \etrinv.  
\begin{lemmaAlph}[\etrinv Reduction]
\label{lem:Reduction-SQUARE-INV}
Let $\delta_2\in(0,1/6)$ be given, and let $\delta_1\mydef \delta_2/1800$.
Consider an instance $\mathcal I_1$ of the \etrsquare problem such that $\delta(\mathcal I_1)=\delta_1$, and let $\Phi_1\mydef\Phi(\mathcal I_1)$.
We can in $O(|\Phi_1|)$ time compute an instance $\mathcal I_2$ of the \etrinv problem with $\delta(\mathcal I_2)=\delta_2$ and formula $\Phi_2\mydef\Phi(\mathcal I_2)$ such that
 $V(\Phi_1) \simeq V(\Phi_2)$ and
 $V(\Phi_1) \leqlin V(\Phi_2)$.
\end{lemmaAlph}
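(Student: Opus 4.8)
The strategy mirrors the previous reductions: we keep all the addition constraints and the range intervals of $\mathcal I_1$ intact, and we have to simulate each squaring constraint $x^2 = y$ of $\Phi_1$ using only additions and inversions of the form $u \cdot v = 1$, while keeping every newly introduced variable inside some interval $I(\cdot) \subseteq [1/2,2]$ of length at most $2\delta_2$, and with the multiplied variables landing in $[1-\delta_2,1+\delta_2]$. The algebraic identity to exploit is the standard one expressing a square through reciprocals, e.g.
\[
\frac{1}{t-1} - \frac{1}{t} = \frac{1}{t(t-1)} = \frac{1}{t^2 - t},
\]
or more conveniently an identity of the shape $2/t - 1/(t+s) - 1/(t-s)$ which, after clearing denominators, is a rational function whose numerator carries the term $s^2$; by choosing $t$ to be a suitable constant and $s$ a small shift of the variable being squared, one recovers $x^2$ up to affine terms. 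As in Lemma~\ref{lem:Reduction-SHIFT-SQUARE}, I would first build the constant variables $\const{1}$ (via some inversion fixing it, such as $\const{1}\cdot\const{1}=1$) and then $\const{1/2}, \const{3/4}, \const{3/2}$, etc., by additions, and for every auxiliary variable $\const{q}$ allow the companions $\const{q/2}$ and $\const{2q}$ via one addition each.

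\textbf{The core gadget.} For a squaring constraint $x^2 = y$ of $\Phi_1$, write the participating variables in shifted form $\const{x+1}, \const{y+1}$ (the promise of $\mathcal I_1$ gives $\const{x+1}=x+1$ with $x\in[-\delta_1,\delta_1]$, and analogously for $y$). I would realize a reciprocal $w \mapsto 1/w$ around a constant base $c\in\{1/2,1,3/2,\ldots\}$ by the gadget $\const{w}\cdot\const{1/w}=1$, where $\const{w}$ is first shifted so that it sits in $[1-\delta_2,1+\delta_2]$ (hence the very small $\delta_1 = \delta_2/1800$: after the reciprocal and the affine corrections, the amplification of the shift is bounded by a fixed constant, and Lemma~\ref{lem:Core-Value}(a) controls the resulting ranges). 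Concretely, each reciprocal or inversion step is applied only to a variable of the form $\const{1+(\text{small})}$, its output is of the form $\const{1+(\text{small})}$ again (by $1/(1+s) = 1 - s + O(s^2)$), and the final linear combination of two or three such reciprocals, scaled and shifted, reconstructs $\const{x^2 + (\text{affine in }x,y)}$; matching this against $\const{y+\text{const}}$ gives the constraint encoding $x^2=y$. As before, the whole gadget is conveniently presented as a diagram of arrows labeled by the elementary operations (addition with a constant, halving, doubling, inversion), and one checks locally that each arrow's target lies in an interval $[c'-\delta_2,c'+\delta_2]\subset[1/2,2]$.

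\textbf{Verification of the three properties.} Having fixed the gadget, the three claims are established exactly as in the previous lemmas. (i) \emph{Forward map $f$:} given $\mathbf x\in V(\Phi_1)$, assign every auxiliary variable $\const{p(x,y)}$ the value $p(x,y)$; the promise of $\mathcal I_1$ forces $x,y\in[-\delta_1,\delta_1]$, so Lemma~\ref{lem:Core-Value}(a) (the non-constant coefficients of each $p$ being bounded by a fixed constant) yields $p(x,y)\in[c-\delta_2,c+\delta_2]=I(\const{p(x,y)})\subset[1/2,2]$, and each inversion gadget $\const{w}\cdot\const{1/w}=1$ is satisfied because $w\cdot(1/w)=1$. (ii) \emph{Inverse map $f^{-1}$:} the orthogonal projection forgetting all auxiliary variables sends a solution of $\Phi_2$ to a solution of $\Phi_1$; one reads off from the gadget that the auxiliary values are forced to equal the intended polynomials, so the simulated squaring constraint indeed holds, and the addition constraints carry over verbatim. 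Because $f^{-1}$ is an orthogonal projection and each dropped coordinate is an affine shift of a coordinate of $\Phi_1$ (actually of a polynomial thereof — still enough, since $f$ and $f^{-1}$ are mutually inverse continuous bijections with rational components), we get $V(\Phi_1)\simeq V(\Phi_2)$ and $V(\Phi_1)\leqlin V(\Phi_2)$. (iii) \emph{Running time:} constants and the $\const{1}$ gadget are $O(1)$, and each constraint of $\Phi_1$ spawns $O(1)$ variables and constraints, so $\Phi_2$ is computed in $O(|\Phi_1|)$ time.

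\textbf{Main obstacle.} The only genuinely delicate point is the bookkeeping of ranges: one must pick the constant bases in the reciprocal gadgets so that every intermediate variable — in particular each factor entering an inversion — lies in $[1-\delta_2,1+\delta_2]$ and every other variable in $[1/2,2]$, and then verify via Lemma~\ref{lem:Core-Value} that the error amplification through the constantly-many affine-plus-inversion steps stays within the slack provided by $\delta_1=\delta_2/1800$. This is why the constant $1800$ (rather than something smaller) appears; the figure/diagram makes the chain of operations explicit and reduces the check to a finite, routine case analysis on the three possible constant terms $c$.
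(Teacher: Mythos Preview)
Your overall strategy is right and matches the paper's: keep additions verbatim, simulate each squaring $\const{x+1}^2=\const{x^2+2x+1}$ by a fixed-length chain of additions, halvings, doublings, and inversions, and then check ranges. There is, however, a technical gap in your range verification.

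The auxiliary variables produced by the gadget are not polynomials in $x$ but \emph{rational functions} $\const{p(x)/q(x)}$: each inversion replaces a variable by its reciprocal, so after one or more inversions the intermediate values carry nontrivial denominators. Consequently Lemma~\ref{lem:Core-Value}(a), which assumes $q\equiv 1$, does not apply; you must invoke case~(b). In the paper's gadget the coefficients of the non-constant terms of $p$ and $q$ lie in $[-1,15]$, the constant terms satisfy $b_6^2\geq 1$ and $a_6+(1+\delta_2)b_6\leq 6+2\cdot 9=24$, and case~(b) then yields the desired inclusion in $[c-\delta_2,c+\delta_2]$ exactly when $\delta_1\leq \frac{\delta_2\, b_6^2}{5\cdot 15\,(a_6+(1+\delta_2)b_6)}$, for which $\delta_1=\frac{\delta_2}{5\cdot 15\cdot 24}=\delta_2/1800$ suffices. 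That is where the constant $1800$ actually comes from; it is not produced by the polynomial bound of case~(a), and your argument as written does not justify it.

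Two smaller points. First, the \etrinv definition does \emph{not} require the factors in an inversion constraint $u\cdot v=1$ to lie in $[1-\delta_2,1+\delta_2]$ (unlike the multiplicands in \etrshift or the base in \etrsquare); in the paper's gadget the constant terms $c=p(0)/q(0)$ range over $[2/3,7/4]$, so insisting on inverting only near $1$ is an unnecessary self-imposed restriction (and it would make the design of the gadget harder, not easier). Second, since squaring is unary, the auxiliary expressions depend on a single $x$, not on a pair $(x,y)$; writing $\const{p(x,y)}$ here is misleading.
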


\begin{proof}
As in the proof of Lemma~\ref{lem:Reduction-SMALL-SHIFT}, we first construct constant variables $\const{b}$ for each of $b\in\{1/2, 3/4, 1, 3/2\}$.
The only difference is that we now construct $\const{1}$ using the constraint $\const{1}\cdot\const{1}=1$.
It follows from this constraint that $\const{1}\in\{-1,1\}$, and since $\const{1}\in[1/2,2]$, we must have $\const{1}=1$ in every valid solution.
The other constants are then constructed in exactly the same way as before.
For this reduction we also need the constant variable $\const{2/3}$ which is constructed as $\const{2/3}\cdot \const{3/2} = 1$.


We include each variable $x$ of $\Phi_1$ in $\Phi_2$ as well, and reuse the interval $I(x)$ from $\mathcal I_1$ in $\mathcal I_2$.
We also reuse all contraints from $\Phi_1$ of the form $a+b=c$ in $\Phi_2$, but we have to do something different for the squaring constraints $a^2=b$.
In $\Phi_2$, we rename the variables as $\const{x+1}\mydef a$ and $\const{x^2+2x+1}\mydef b$, so that $a^2 =b$ becomes
\begin{align}
\label{etrinv:sqr}
\const{x+1}^2=\const{x^2+2x+1}.\tag{$\dagger$}
\end{align}
Consider a number $x\in\R$ and the two conditions
\begin{align}
& \const{x+1}=x+1,\label{etrinv:cond}\tag{$\star$} \\
& \const{x^2+2x+1}=x^2+2x+1.\label{etrinv:cond2}\tag{$\star\star$}
\end{align}
As in the proof of Lemma~\ref{lem:Reduction-SHIFT-SQUARE}, one can prove that~\eqref{etrinv:sqr} is equivalent to
\begin{align}
\text{\eqref{etrinv:cond} implies \eqref{etrinv:cond2}}.\label{etrinv:cond3}\tag{$\ddagger$}
\end{align}
Our aim is therefore to make constraints in $\Phi_2$ that ensure~\eqref{etrinv:cond3}.

 In the same way as described in Section~\ref{sec:SHIFT}
 and Section~\ref{sec:SQUARE}, we can add values, subtract values,
 half and double variables.
Figure~\ref{fig:VariableSequence3} shows the construction of $\const{x^2+2x+1}$ using these tricks as well as inversions.
It is straightforward to check that the construction ensures condition~\eqref{etrinv:cond3}.

\begin{figure}[htbp]
 \centering
 \includegraphics[page = 7]{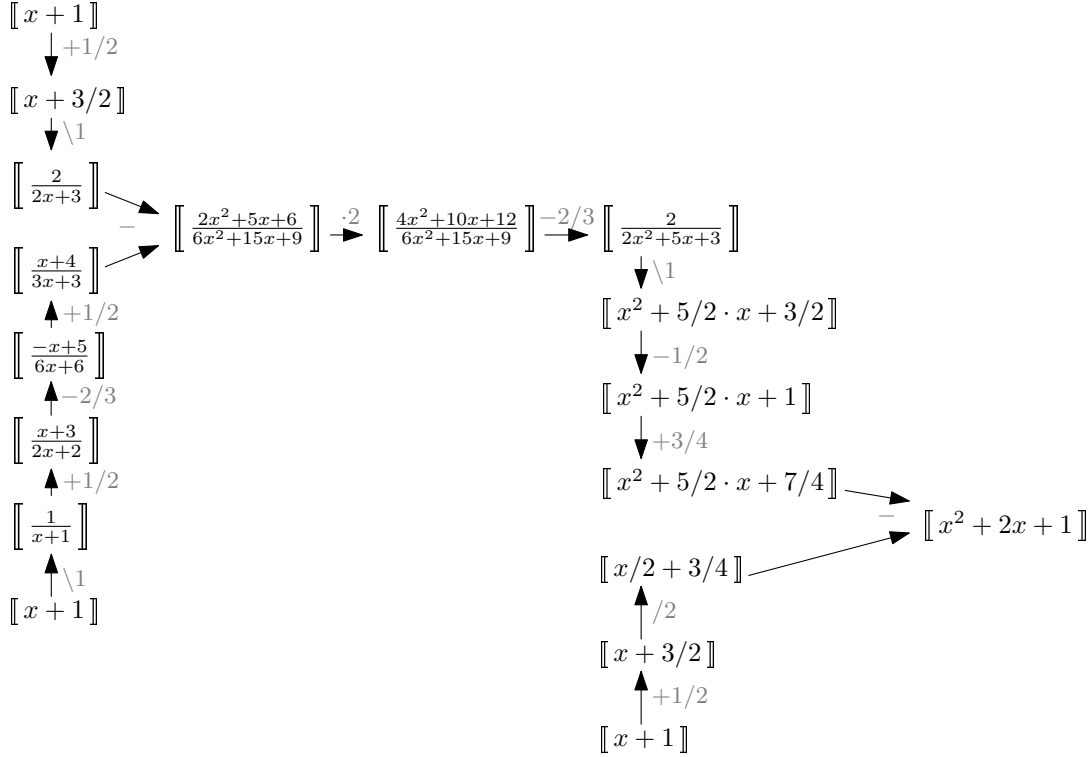}
 \caption{The sequence above enforces $ y = x^2$, using only addition and inversion.
Inversion is denoted by $\setminus 1$.}
 \label{fig:VariableSequence3}
\end{figure}

For all the variables $x$ of $\Phi_2$ that also appear in $\Phi_1$, we define the interval $I(x)$ in $\mathcal I_2$ as it is in $\mathcal I_1$.
Each of the auxilliary variables has the form $\const{\frac{p(x)}{q(x)}}$, where $p(x)$ and $q(x)$ are polynomials of degree $2$.
We define $c\mydef \frac{p(0)}{q(0)}$ and note that $c\in[2/3,7/4]$.
We define $I\left(\const{\frac{p(x)}{q(x)}}\right)\mydef [c-\delta_2,c+\delta_2]$.
This finishes the construction of $\mathcal I_2$.
Note that since $\delta_2\leq 1/6$ and $c\in[2/3,7/4]$, we get that $I\left(\const{\frac{p(x)}{q(x)}}\right)\subset[1/2,2]$.

We now verify that $\mathcal I_2$ has the claimed properties.
Consider a solution $\mathbf x\in V(\Phi_1)$.
For all the variables of $\Phi_2$ that also appear in $\Phi_1$, we use the same value as in the solution $\mathbf x$.
Each auxilliary variable has the form $\const{\frac{p(x)}{q(x)}}$, where $p(x)$ and $q(x)$ are polynomials of degree $2$, and $\Phi_1$ contains the squaring constraint~\eqref{etrinv:sqr}.
We then get from the promise of $\Phi_1$ that $\const{1+x}=1+x$ for $x\in[-\delta_1,\delta_1]$.
From the construction shown in Figure~\ref{fig:VariableSequence3}, it follows in order to get a solution to $\Phi_2$, we must define $\const{\frac{p(x)}{q(x)}}\mydef \frac{p(x)}{q(x)}$.
We are going to apply case~(b) of Lemma~\ref{lem:Core-Value} to show that this solution stays in the required range $[1/2,2]$.
The coefficients of the non-constant terms of $p(x)$ and $q(x)$ are in the range $[-1,15]$.
Denote by $a_6$ and $b_6$ the constant terms of $p(x)$ and $q(x)$, respectively.
We observe that $b_6^2\geq 1$ and $a_6+(1+\delta_2)b_6\leq 6+2\cdot 9=24$.
We therefore get that since $\delta_1\mydef \delta_2/1800=\frac{\delta_2}{5\cdot 15\cdot 24}<\frac{\delta_2 b_6^2}{5\cdot 15(a_6+(1+\delta_2)b_6)}$, then $\const{\frac{p(x)}{q(x)}}\in [c-\delta_2,c+\delta_2]\subset[1/2,2]$.

Similarly, we see that any solution to $\Phi_2$ corresponds to a solution to $\Phi_1$.
Using the promise of $\mathcal I_1$ and Lemma~\ref{lem:Core-Value} as above, we then also confirm the promise of $\mathcal I_2$ that each variable $u$ of $\Phi_2$ is in $I(u)$.

The correspondance described implies that $V(\Phi_1)\simeq V(\Phi_2)$ and $V(\Phi_1)\leqlin V(\Phi_2)$.
For each constraint of $\Phi_1$, we introduce $O(1)$ variables and constraints of $\Phi_2$, so the reduction takes $O(|\Phi_1|)$ time.
\end{proof}

\bibliographystyle{plain}
\bibliography{lib}

\begin{thebibliography}{10}

\bibitem{ARTETR}
Mikkel Abrahamsen, Anna Adamaszek, and Tillmann Miltzow.
\newblock The art gallery problem is $\exists \mathbb{R}$-complete.
\newblock In {\em Proceedings of the 50th Annual {ACM} {SIGACT} Symposium on
  Theory of Computing, {STOC} 2018, Los Angeles, CA, USA, June 25-29, 2018},
  pages 65--73, 2018.

\bibitem{e_k1974design}
Alfred~V. Aho, John~E. Hopcroft, and Jeffrey~D. Ullman.
\newblock {\em The design and analysis of computer algorithms}.
\newblock Addison-Wesley, 1975.

\bibitem{basu2010bounding}
Saugata Basu and Marie-Fran{\c{c}}oise Roy.
\newblock Bounding the radii of balls meeting every connected component of
  semi-algebraic sets.
\newblock {\em Journal of Symbolic Computation}, 45(12):1270--1279, 2010.

\bibitem{blum2012complexity}
Lenore Blum, Felipe Cucker, Michael Shub, and Steve Smale.
\newblock {\em Complexity and real computation}.
\newblock Springer Science \& Business Media, 2012.

\bibitem{canny1988some}
John Canny.
\newblock Some algebraic and geometric computations in {PSPACE}.
\newblock In {\em Proceedings of the twentieth annual ACM Symposium on Theory
  of Computing (STOC 1988)}, pages 460--467. ACM, 1988.

\bibitem{AreasKleist}
Michael~G. Dobbins, Linda Kleist, Tillmann Miltzow, and Pawe\l{} Rz{\c
  a}{\.z}ewski.
\newblock {$\forall \exists \mathbb{R}$-completeness and area-universality}.
\newblock {\em Arxiv}, 2017.
\newblock Preprint, \url{https://arxiv.org/abs/1712.05142}.

\bibitem{NestedPolytopesER}
Michael~Gene Dobbins, Andreas Holmsen, and Tillmann Miltzow.
\newblock A universality theorem for nested polytopes.
\newblock {\em arXiv}, 1908.02213, 2019.

\bibitem{Kempe1876}
A.~B. Kempe.
\newblock On a general method of describing plane curves of the $n^{
  \text{th}}$ degree by linkwork.
\newblock 1-7(1):213--216, 1876.
\newblock Proceedings of the London Mathematical Society.

\bibitem{AnnaPreparation}
Anna Lubiw, Tillmann Miltzow, and Debajyoti Mondal.
\newblock The complexity of drawing a graph in a polygonal region.
\newblock In {\em International Symposium on Graph Drawing and Network
  Visualization}, pages 387--401. Springer, 2018.
\newblock arxiv: 1802.06699.

\bibitem{matousek2014intersection}
Ji{\v{r}}{\'{\i}} Matou{\v{s}}ek.
\newblock Intersection graphs of segments and $\exists \mathbb{R}$.
\newblock 2014.
\newblock Preprint, \url{https://arxiv.org/abs/1406.2636}.

\bibitem{mnev1988universality}
Nicolai~E Mn{\"e}v.
\newblock The universality theorems on the classification problem of
  configuration varieties and convex polytopes varieties.
\newblock In Oleg~Y. Viro, editor, {\em Topology and geometry -- Rohlin
  seminar}, pages 527--543. Springer-Verlag Berlin Heidelberg, 1988.

\bibitem{Schaefer2010}
Marcus Schaefer.
\newblock Complexity of some geometric and topological problems.
\newblock In {\em Proceedings of the 17th International Symposium on Graph
  Drawing (GD 2009)}, volume 5849 of {\em Lecture Notes in Computer Science
  (LNCS)}, pages 334--344. Springer, 2009.

\bibitem{schaefer2013realizability}
Marcus Schaefer.
\newblock Realizability of graphs and linkages.
\newblock In J\'{a}nos Pach, editor, {\em Thirty Essays on Geometric Graph
  Theory}, chapter~23, pages 461--482. Springer-Verlag New York, 2013.

\bibitem{DBLP:journals/mst/SchaeferS17}
Marcus Schaefer and Daniel \v{S}tefankovi\v{c}.
\newblock Fixed points, {N}ash equilibria, and the existential theory of the
  reals.
\newblock {\em Theory of Computing Systems}, 60(2):172--193, 2017.

\bibitem{shor1991stretchability}
Peter~W. Shor.
\newblock Stretchability of pseudolines is {NP}-hard.
\newblock In Peter Gritzmann and Bernd Sturmfels, editors, {\em Applied
  Geometry and Discrete Mathematics: The Victor Klee Festschrift}, volume~4 of
  {\em DIMACS -- Series in Discrete Mathematics and Theoretical Computer
  Science}, pages 531--554. American Mathematical Society and Association for
  Computing Machinery, 1991.

\end{thebibliography}



\end{document}